\newtheorem {theorem}{Theorem}[section]
\newtheorem {corollary}{Corollary}[section]
\newtheorem{definition}{Definition}[section]
\newtheorem{lemma}{Lemma}[section]
\newtheorem{remark}{Remark}[section]
\newenvironment{proof}[1][Proof]{\textbf{#1.} }{\
\rule{0.5em}{0.5em}}
\newcommand{\bi}[1]{\mbox{\boldmath{$ #1 $}}}
\def\E{{{\mathbb E}\,}}
\def\Var{{\mathop {{\rm Var\, }}}}
\begin{document}

\title{Jackknife Empirical Likelihood Approach for $K$-sample Tests}
\author{Yongli Sang\textsuperscript{a}\thanks{CONTACT Yongli Sang. Email: yongli.sang@louisiana.edu}, Xin Dang\textsuperscript{b} and Yichuan Zhao\textsuperscript{c}}
\date{%
    \textsuperscript{a}Department of Mathematics, University of Louisiana at Lafayette, Lafayette, LA 70504, USA\\%
    \textsuperscript{b}Department of Mathematics, University of Mississippi, University, MS 38677, USA\\[2ex]%
     \textsuperscript{c}Department of Mathematics and Statistics, Georgia State University, Atlanta, GA 30303, USA\\[2ex]%
    \today
}

\maketitle

\begin{abstract}
\noindent
The categorical Gini correlation is an alternative measure of dependence between a categorical and numerical variables, which characterizes the independence of the variables. 
A  nonparametric test for the equality of $K$ distributions has been developed based on the categorical Gini correlation. By applying the jackknife empirical likelihood approach, the standard limiting chi-square distribution with degree freedom of $K-1$ is established and is used to determine critical value and $p$-value of the test. Simulation studies show that the proposed method is competitive to existing methods in terms of power of the tests in most cases. The proposed method is illustrated in an application on a real data
set.

\noindent  

\vskip.2cm 

\noindent {\bf Keywords:}
\noindent  Energy distance, $K$-sample test, Jackknife empirical likelihood, $U$-statistic, Wilks' theorem, Categorical Gini correlation.

\vskip.2cm 
\noindent  {\textit{MSC 2010 subject classification}: 62G35, 62G20}

\end{abstract}

\section{Introduction}
\noindent
Testing the equality of $K$ distributions from independent random samples is a classical statistical problem encountered in almost every field.  Due to its fundamental importance and wide applications, research for the $K$-sample problem has been kept active since 1940's.  Various tests have been proposed and new tests continue to emerge.  

Often an omnibus test is based on a discrepancy measure among distributions.  For example, the widely used and well-studied tests such as Cram\'{e}r-von Mises test (\cite{Kiefer1959}), Anderson-Darling (\cite{Darling57, Scholz1987}) and their variations utilize different norms on the difference of empirical distribution functions, while some (\cite{Anderson94, Martinez09}) are based on the comparison of density estimators if the underlying distributions are continuous.  Other tests (\cite{Szekely04, Fernandez08}) are based on characteristic function difference measures.  One of such measures is the energy distance (\cite{Szekely13, Szekely17}).  It is the weighted $L_2$ distance between characteristic functions and is defined as follows. 

\begin{definition}[Energy distance]
Suppose that $(\bi X, \bi X') $ and $(\bi Y, \bi Y') $ are independent pairs independently from d-variate distributions $F$ and $G$, respectively. Then the energy distance between $\bi X$ and $\bi Y$ is  
\begin{align} \label{energy-distance}
\mathcal{E}(\bi X,\bi Y)=2\E\|\bi X-\bi Y\|-\E\|\bi X-\bi X'\|-\E\|\bi Y-\bi Y'\|. 
\end{align}
\end{definition}
Let the characteristic functions of $\bi X$ and $\bi Y$ be $\psi_x(\bi t)$ and $\psi_y(\bi t)$, respectively. It has been proved that 
$$\mathcal{E}(\bi X, \bi Y) = c_d \int_{\mathbb{R}^d} \frac{\|\psi_x(\bi t) -\psi_y(\bi t)\|^2}{\| \bi t\|^{d+1}} d\bi t,$$
where $c_d$ is a constant depending on $d$.   Clearly, $\mathcal{E}(\bi X,\bi Y)=0$ if and only if $F=G$. 
A natural estimator of (\ref{energy-distance}), the linear combination of three $U$-statistics,  is called energy statistic.   Reject $F=G$ if the energy statistic is sufficiently large.  To extend to the $K$-sample problem, Rizzo and Sz\'{e}kely (\cite{Rizzo2010}) proposed a new method called distance components (DISCO) by partitioning the total distance dispersion of the pooled samples into the within distance and  between distance components analogous to the variance components in ANOVA. The test statistic is the ratio of the between variation and the within variation, where the between variation is the weighted sum of all two-sample energy distances. Equivalently, Dang {\em et al} \cite{Dang2018} conduced a test based on the ratio of the between variation and the total variation, in which the ratio defines a dependence measure.  Although those tests are consistent against any departure of the null hypothesis and are easy to compute the test statistics, the tests have to reply on a permutation procedure to determine the critical values since the null distribution depends on the unknown underlying distributions. 

Empirical likelihood (EL) tests (\cite{Cao06, EM03,Zhang2007}) successfully avoid the time-consuming permutation procedure.  As a nonparametric approach, the EL (\cite{Owen1988, Owen1990}) also enjoys effectiveness of likelihood method and hence has been widely used, see \cite{Qin2001, Qin1994, Wang1999} and the references therein. We refer to \cite{Chen2008, Chen2015, Cheng2018, Emerson2009} for the updates about the EL in high dimensions.
When the constraints are nonlinear, EL loses this efficiency. To overcome this computational difficulty,  Wood $et$ $al.$ (\cite{Wood1996}) proposed a sequential linearization method by linearizing the nonlinear constraints. However, they did not provide the Wilks' theorem and stated that it was not easy to establish.   Jing {\em et al.} (\cite{Jing2009}) proposed the jackknife empirical likelihood (JEL) approach. The JEL method transforms  the maximization problem of the EL with nonlinear constraints to the simple case of EL on the mean of  jackknife pseudo-values, which is very effective in handling one and two-sample $U$-statistics. This approach has attracted statisticians' strong interest in a wide range of fields due to its efficiency, and many papers are devoted to the investigation of the method.


%

Recently several JEL tests (\cite{Liu2015, Liu2017, Liu2018}) based on characteristic functions have been developed for the two-sample problem.  Wan, Liu and Deng (\cite{Wan2018}) proposed a JEL test using the energy distance, which is a function of three $U$-statistics. To avoid the degenerate problem of $U$ statistics, a nuisance parameter is introduced and the resulting JEL method involves three constraints.  The limiting distribution of the log-likelihood is a weighted chi-squared distribution. Directly generalizing their JEL test to the $K$-sample problem may not work since the number of constraints increases quickly with $K$. There are $K(K+1)/2$ constraints, not only casting difficulty in computation but also bringing challenges in theoretical development. 

We propose a JEL test for the $K$-sample problem with only $K+1$ constraints. We treat the $K$-sample testing problem as a dependence test between a numerical variable and a categorical variable indicating samples from different populations.   We apply JEL with the Gini correlation  that mutually characterizes the dependence (\cite{Dang2018}).  The limiting distribution of the proposed JEL ratio is a standard Chi-squared distribution.  To our best knowledge, our approach is the first consistent JEL test for univariate and multivariate $K$-sample problems in the literature. 
The idea of viewing the $K$-sample test as an independent test between a numerical and categorical variable is not new.  Jiang, Ye and Liu (\cite{Jiang2015}) proposed a nonparametric test based on mutual information. The numerical variable is discretized so that the mutual information can be easily evaluated. However, their method only applies to univariate populations.  Heller, Heller and Gorfine  (\cite{Heller2013, Heller2016}) proposed a dependence test based on rank distances, but their test requires a permutation procedure.

The reminder of the paper is organized as follows. In Section 2, we develop the JEL method for the $K$-sample  test. Simulation studies are conducted in Section 3.  A real data analysis is illustrated in Section 4. Section 5  concludes the paper with a brief summary. All proofs are reserved to the Appendix. 

\section{JEL test for $K$-sample based on a categorical Gini correlation}
Let ${\cal A}_k=\left \{\bi X^{k}_{1},  \bi X^{k}_{2}, ..., \bi X^{k}_{n_k}\right \}$ be a sample from $d$-variate distribution $F_k, \ k=1,...,K,$ respectively.  The pooled sample is denoted as ${\cal A}= {\cal A}_1\cup {\cal A}_2...\cup {\cal A}_K$ of sample size $n = n_1+n_2+...+n_K$. 
The objective is to test the equality of the $K$ distributions, that is, 
\begin{align}\label{test:k-sample}
H_0: F_1=...=F_K \;\;  \text{vs.}  \;\; H_a: F_j \neq F_k \ \ \text{for some $1\leq j < k\leq K$}.
\end{align}

 Let $Z$ be the categorical variable taking values $1, 2, ..., K$, and let $\bi X$ be a continuous random variable in $\mathbb{R}^d$ with  the conditional distribution of $\bi X$ given $Z=k$ being $F_k$. Assume $P(Z=k)=\alpha_k>0$. Then the distribution of $\bi X$ is the mixture distribution $F$  defined as 
 $$F=\sum_{k=1}^K \alpha_k F_k. $$ 
 Treating $\hat\alpha_k = n_k/n$ as an unbiased and consistent estimator of $\alpha_k$, we can view the pooled sample ${\cal A}$ as a sample from $F$. 
 
By introducing the two variables $\bi X$ and $Z$, testing \eqref{test:k-sample} is equivalent to testing the independence between $\bi X$ and  $Z$. We will adopt the recently proposed categorical Gini correlation (\cite{Dang2018}) which characterizes the independence of the continuous and categorical variables.

\subsection{Categorical Gini correlation}
Let $\bi X_1$ and $\bi X_2$ be i.i.d. copies from $F$, and $\bi X_1^k$ and  $\bi X_2^{k}$ be i.i.d. copies from $F_k, \ k=1, ..., K$. 
Let 
\begin{align}\label{gmd}
\Delta=\mathbb{E}\|\bi X_1-\bi X_2\|, \ \ \Delta_k=\mathbb{E}\|\bi X_1^k-\bi X_2^{k}\|, \ k=1,..., K, 
\end{align}
be the Gini distance of $\bi X$ and $\bi X^k$, respectively. 
Then the  Gini correlation (\cite{Dang2018}) between a continuous random variable and a categorical variable is defined as 
\begin{definition}[Dang $et$ $al.$ \cite{Dang2018}]
For a non-degenerate random vector $\bi X$ in $\mathbb{R}^d$ and a categorical variable $Z$, if $\mathbb{E}\|\bi X\|<\infty$, the Gini correlation of $\bi X$ and $Z$ is defined as 
\begin{align}
\rho_g(\bi X, Z)=\dfrac{\Delta-\sum_{k=1}^K \alpha_k \Delta_k}{\Delta}.
\end{align}
\end{definition}
The Gini correlation $\rho_g(\bi X, Z)$ characterizes the dependence. That is, $\rho_g(\bi X,Z)=0$ if and only if $\bi X$ and $Z$ are independent. This is because 
\begin{align*}
&S := \Delta-\sum_{k=1}^K \alpha_k \Delta_k = \sum_{k=1}^K \alpha_k {\cal E} (\bi X^k, \bi X)  \\
&= c_d \sum_{k=1}^K \alpha_k \int \frac{\|\psi_k(\bi t)-\psi(\bi t)\|^2}{\|\bi t\|^{d+1}} d \bi t \geq 0, 
\end{align*}
where $c_d$ is a constant depending on $d$,  $\psi_k(\bi t)$ and $\psi(\bi t)$ are characteristic functions of $\bi X^k$ and $\bi X$, respectively.  
Hence we have the following result,
\begin{lemma} [Dang {\em et al}. \cite{Dang2018}] 
For $\E \|\bi X\| <\infty$, $S=0$ if and only if $F_1=F_2=...=F_K$.
\end{lemma}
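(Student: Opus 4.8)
My plan is to build directly on the Fourier representation already displayed in the excerpt,
$$S = \sum_{k=1}^K \alpha_k \mathcal{E}(\bi X^k, \bi X) = c_d \sum_{k=1}^K \alpha_k \int \frac{\|\psi_k(\bi t) - \psi(\bi t)\|^2}{\|\bi t\|^{d+1}} \, d\bi t,$$
which already exhibits $S$ as a sum of nonnegative terms. I would combine this with two facts: the mixing weights satisfy $\alpha_k > 0$, and the energy distance characterizes equality of distributions, i.e.\ $\mathcal{E}(\bi X^k, \bi X) = 0$ exactly when $F_k = F$ (the remark following the energy-distance definition). The moment assumption $\E\|\bi X\| < \infty$ ensures $\E\|\bi X^k\| < \infty$ for each $k$, so that every $\mathcal{E}(\bi X^k, \bi X)$ is well defined.

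For the ``if'' direction I would argue as follows. Assuming $F_1 = \cdots = F_K$, I would first note that, since $F = \sum_{k=1}^K \alpha_k F_k$ with $\sum_k \alpha_k = 1$, the common law of the $F_k$ must coincide with the mixture $F$; hence $F_k = F$ for every $k$. Then $\psi_k \equiv \psi$, every integral in the representation vanishes, and $S = 0$.

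The substantive direction is the converse. Starting from $S = 0$, I would use that the representation writes $S$ as a weighted sum of nonnegative quantities with strictly positive weights $\alpha_k$; a vanishing sum of nonnegative numbers forces each summand to be zero, so $\mathcal{E}(\bi X^k, \bi X) = 0$ for every $k$. Equivalently, for each $k$ the nonnegative, continuous integrand $\|\psi_k(\bi t) - \psi(\bi t)\|^2 / \|\bi t\|^{d+1}$ has zero integral; since the weight is strictly positive for $\bi t \neq 0$ and characteristic functions are continuous, I would conclude $\psi_k(\bi t) = \psi(\bi t)$ for all $\bi t$, and then invoke the uniqueness theorem for characteristic functions to get $F_k = F$ for every $k$, whence $F_1 = \cdots = F_K$.

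I expect no serious obstacle here, precisely because the hard analytic content --- the Fourier representation of the energy distance and the resulting nonnegativity of $S$ --- is already provided. The only steps demanding care are in the converse: using strict positivity of the $\alpha_k$ so that no summand is masked by a zero weight, and passing from a vanishing weighted $L_2$ integral back to pointwise equality of the characteristic functions via their continuity. Once $\psi_k \equiv \psi$ is in hand, uniqueness of characteristic functions closes the argument.
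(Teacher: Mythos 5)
Your proposal is correct and follows essentially the same route as the paper, which justifies this (cited) lemma via the displayed representation $S=\sum_{k=1}^K \alpha_k\,\mathcal{E}(\bi X^k,\bi X)=c_d\sum_{k=1}^K\alpha_k\int \|\psi_k(\bi t)-\psi(\bi t)\|^2\,\|\bi t\|^{-(d+1)}\,d\bi t\ \geq 0$ together with the characteristic-function characterization of the energy distance. The details you supply---strict positivity of the $\alpha_k$, continuity of characteristic functions to pass from a vanishing integral to pointwise equality, and the uniqueness theorem---are precisely the steps the paper leaves implicit when it writes ``Hence we have the following result.''
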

Therefore, testing (\ref{test:k-sample}) will be equivalent to testing whether $S=0$. We can rewrite $S$ as
\begin{align*}
S=\sum_{k=1}^K\alpha_k (\E \|\bi X_1-\bi X_2\|-\E \|\bi X_1^k-\bi X_2^{k}\|), 
\end{align*}
which can be estimated unbiasedly by
\begin{align}\label{U:K-sample}
U_{n_1,...,n_K} =\sum_{k=1}^K \hat{\alpha}_k (U_n-U_{n_k}),
\end{align}
where $\hat{\alpha}_k=n_k/n$, 
\begin{align*}
U_n={n \choose 2}^{-1} \sum_{1\leq i <j \leq n}\|\bi X_i-\bi X_j\|,
\end{align*} and
\begin{align*}
U_{n_k}={n_k \choose 2}^{-1} \sum_{1\leq l <m\leq n_k}\|\bi X^{k}_l- \bi X^{k}_m\|,\;\; k=1,...,K.
\end{align*}
Clearly, $U_n$ and $U_{n_k}$ are $U$-statistics of degree 2 with the kernel being $h(\bi x_1,\bi x_2)=\|\bi x_1-\bi x_2\|$. $U_n$ and $U_{n_k}$ are unbiased estimators of $\Delta$ and $\Delta_k$, respectively. 

Under $H_0$, we have $\Delta=\Delta_1=...=\Delta_K$. Conversely, $\Delta=\Delta_1=...=\Delta_K$. Then $S=0$ and hence $F_1=F_2=...=F_K$. Therefore,
Testing $H_0: F_1=...=F_K$ is equivalent to testing 
\begin{align} \label{EUtest}
H^{\prime}_0: \E U_n=\E U_{n_1}=...=\E U_{n_K}.
\end{align}

JEL has been proven to be very effective in dealing with $U$-statistics \cite{Jing2009}, and therefore we will utilize
the JEL approach to test (\ref{EUtest}). 

 \subsection{JEL test for $K$-sample}
In order to apply JEL, we define the 
the corresponding jackknife pseudo-values for $U_n, \ U_{n_k}, k=1,..,K$ as
\begin{align*}
&\hat{V}_i=n U_n-(n-1)U^{(-i)}_{n-1}, \ \text{$i=1, ..., n$}\\
&\hat{V}^{k}_l=n_k U_{n_k}-(n_k-1)U^{(-l)}_{n_k}, \ \text{$l=1, ..., n_k$}
\end{align*}
where
\begin{align*}
U^{(-i)}_{n-1}={n-1 \choose 2}^{-1} \sum_{1\leq j <m \leq n, j,m\neq i}\|\bi X_j-\bi X_m\|,
\end{align*}
and
\begin{align*}
U^{(-l)}_{n_k}={n_k-1 \choose 2}^{-1} \sum_{1\leq j <m \leq n_k, j,m\neq l}\|\bi X^{k}_j- \bi X^{k}_m\|.
\end{align*}
It is obvious to see that 
\begin{align*}
&U_n=\dfrac{1}{n} \sum_{i=1}^{n}\hat{V}_i, \\
&U_{n_k}=\dfrac{1}{n_k} \sum_{l=1}^{n_k}\hat{V}^{k}_l, \;\;\;\mbox{for}\; k = 1,...,K.
\end{align*}
Under $H_0$, we have
\begin{align*}
\E \hat{V}_i=\E \hat{V}^{k}_l=\theta_0,\ i=1, ..., n; \ l=1,...,n_k; \ k=1,...,K, 
\end{align*}
where $\theta_0=\E \|\bi X_1-\bi X_2\|=\E \|\bi X_1^1 -\bi X_2^1\|=...= \E \|\bi X_1^K -\bi X_2^K\|$, with the expectations taking under $H_0$. 

Next, we apply the JEL to the above jackknife pseudo values. 
Let $\bi p_k=(p_{k1},...,p_{kn_k})$ be the empirical probability vector assigned to the elements of ${\cal A}_k$, $k=1,...,K$, and  $\bi p=(p_1,...,p_n)$ be probability vector for $\cal A$.   We have the following optimization problem. 

\begin{equation} \label{Optfun}
R =\max_{\bi p_k,\bi p, \theta} \left \{\left (\prod_{k=1}^{K} \prod_{l=1}^{n_k}n_kp_{kl} \right ) \left ( \prod_{i=1}^{n} n p_i\right)\right \},\\
\end{equation}
subject to the following constraints
\begin{align}
&p_{kl} \ge 0, \ l=1,...,n_k, \ \sum_{l=1}^{n_k} p_{kl}=1, \ 1\leq k \leq  K;  \nonumber\\
&p_i\geq 0,  \  i =1,...,n, \ \sum_{i=1}^{n}p_i=1; \nonumber \\
& \sum_{i=1}^{n}p_i \left ( \hat{V}_i-\theta \right )=0; \ \sum_{l=1}^{n_k}p_{kl} \left(\hat{V}^{k}_l-\theta\right )=0, \ k=1,...,K. \label{constraint}
\end{align}
\begin{remark}
$R$ in equation (\ref{Optfun}) maximizes the squared standard jackknife empirical likelihood ratio (JELR). This is because  $p_i$ is the marginal probability and $p_{kl}$ is the conditional probability and then we have $\prod_{i=1}^n np_i = \prod_{k=1}^K \prod_{l=1}^{n_k} n_k p_{kl}$.  The maximization in $R$ is the same maximization solution of the regular JELR. 
\end{remark}
Applying Lagrange multiplier, one has
\begin{align*}
&p_{kl}=\frac{1}{n_k} \dfrac{1}{1+\lambda_{k} \left(\hat{V}^{k}_l-\theta\right)},  \  l=1, ..., n_k,\ k=1,..,K,\\
&p_i=\frac{1}{n} \dfrac{1}{1+\lambda \left ( \hat{V}_i-\theta \right )}, \ i=1, ..., n,
\end{align*}
where $(\lambda, \lambda_1..., \lambda_{K}, \theta)$ satisfy the following $K+2$ equations: 
\begin{align} \label{findtheta}
&\sum_{i=1}^{n} \frac{\hat{V}_i-\theta}{1+\lambda \left(\hat{V}_i-\theta\right )}=0, \nonumber \\
&\sum_{l=1}^{n_k} \frac{\hat{V}^{k}_l-\theta}{1+\lambda_{k} \left(\hat{V}^{k}_l-\theta\right )}=0, \ k=1,...,K, \nonumber \\
&\lambda\sum_{i=1}^{n} \frac{-1}{1+\lambda \left(\hat{V}_i-\theta\right )}+\sum_{k=1}^K\lambda_{k}\sum_{l=1}^{n_k} \frac{-1}{1+\lambda_{k} \left(\hat{V}^{k}_l-\theta\right )}=0.
\end{align}
In Lemma \ref{Liu5.5}, we have proved the existence of the solutions for the above equations in the Appendix. We denote the solution of (\ref{findtheta}) as $(\tilde{\lambda}, \tilde{\lambda}_1,...,\tilde{\lambda}_K, \tilde{\theta})$. Thus we have the jackknife empirical log-likelihood ratio 
\begin{align}
&-2\log R = -2\sum_{k=1}^K\sum_{l=1}^{n_k} \log(n_kp_{kl}) -2 \sum_{i=1}^n \log(np_i)  \nonumber\\
&= 2\sum_{k=1}^K \sum_{l=1}^{n_k}\log \left \{1+\tilde{\lambda}_{k}  \left(\hat{V}^{k}_l-\tilde{\theta}\right ) \right \} +2 \sum_{i=1}^{n}\log \left \{1+\tilde{\lambda}  \left(\hat{V}_i-\tilde{\theta}\right )
\right \}.
\end{align}
Define $g(\bi x)=\E \|\bi x -\bi X\|, \ \sigma^2_{g}=\Var (g(\bi X));\ 
g_k(\bi x)=\E\|\bi x-\bi X^{k}\|, \ \sigma^2_{g_k}=\Var (g_k(\bi X^k)), \ k=1, ..., K$ and assume
\begin{itemize}
\item \textbf{C1.} $0< \sigma_{g_k}<\infty, \ k=1,...,K$;
\item \textbf{C2.} $\dfrac{n_k}{n} \to \alpha_k>0,\ k=1,...,K$ and $\alpha_1+\alpha_2+...+\alpha_K=1.$
\end{itemize}
Note that \textbf{C1} implies $0 < \sigma_g<\infty$. We have the following Wilks' theorem.
\begin{theorem}\label{wilk-ksample}
Under $H_0$ and the conditions \textbf{C1} and \textbf{C2}, we have 
$$ -2 \log  R \stackrel{d}{\rightarrow} \chi^{2}_{K-1}, \;\;\;\text{as $n \to \infty$}.$$
\end{theorem}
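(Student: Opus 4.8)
The plan is to follow the standard empirical-likelihood expansion adapted to the jackknife pseudo-values, keeping careful track of the dependence between the pooled statistic $U_n$ and the within-sample statistics $U_{n_k}$; it is precisely this dependence that collapses the naive count of $K+1$ constraints (minus one profiled parameter $\theta$) into a limit with only $K-1$ degrees of freedom.

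First I would establish an asymptotic linear (H\'ajek-type) representation of the pseudo-values. Since $U_n$ and the $U_{n_k}$ are $U$-statistics of degree $2$ with kernel $h(\bi x_1,\bi x_2)=\|\bi x_1-\bi x_2\|$, standard jackknife theory gives
\[
\hat V_i = 2g(\bi X_i)-\theta_0 + r_i,\qquad \hat V^{k}_l = 2g_k(\bi X^{k}_l)-\theta_0+r^{k}_l,
\]
with remainders that are uniformly negligible and with $\E[2g(\bi X)-\theta_0]=\theta_0$. Under $H_0$ one has $g_k\equiv g$, so every leading term has the common variance $4\sigma_g^2$, and the sample variances of the pseudo-values converge to $4\sigma_g^2$. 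Because the pooled sample is the disjoint union of the $K$ subsamples, averaging the representation yields the key identity
\[
n(U_n-\theta_0)=\sum_{k=1}^{K} n_k(U_{n_k}-\theta_0)+o_p(\sqrt n)
\]
under $H_0$: the centered pooled statistic is, to leading order, a fixed linear combination of the centered within-sample statistics. Writing $Y_k=\sqrt{n_k}(U_{n_k}-\theta_0)/(2\sigma_g)$ and $Y_0=\sqrt n(U_n-\theta_0)/(2\sigma_g)$, conditions \textbf{C1}--\textbf{C2} and the central limit theorem for $U$-statistics give $\bi Y=(Y_1,\dots,Y_K)^\top\stackrel{d}{\rightarrow}N(0,I_K)$ with independent coordinates, while the identity forces $Y_0=\bi a^\top\bi Y+o_p(1)$, where $\bi a=(\sqrt{\alpha_1},\dots,\sqrt{\alpha_K})^\top$ is a unit vector because $\sum_k\alpha_k=1$.

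Next I would carry out the Lagrange-multiplier analysis of the $K+2$ equations in (\ref{findtheta}). Using Lemma \ref{Liu5.5} for existence, together with the moment bounds on the pseudo-values, I would show that $\tilde\la=O_p(n^{-1/2})$, $\tilde\la_k=O_p(n_k^{-1/2})$ and $\tilde\theta-\theta_0=O_p(n^{-1/2})$. A two-term Taylor expansion of each $\log(1+\cdot)$ term in $-2\log R$, carried out group by group exactly as in the single-sample JEL, then gives
\[
-2\log R=\min_{\theta}\left\{\frac{n(U_n-\theta)^2}{4\sigma_g^2}+\sum_{k=1}^{K}\frac{n_k(U_{n_k}-\theta)^2}{4\sigma_{g_k}^2}\right\}+o_p(1),
\]
where the minimization over $\theta$ encodes the profiling forced by the last equation of (\ref{findtheta}) and the unknown variances have been replaced by their limits (all equal to $4\sigma_g^2$ under $H_0$). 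Writing $\theta=\theta_0+t$, the bracket becomes $\sum_{j=0}^{K}(Y_j-a_j s)^2$ with $a_0=\sqrt n$, $a_k=\sqrt{n_k}$ and $s=t/(2\sigma_g)$; minimizing this elementary quadratic gives the value $\sum_{k}Y_k^2+Y_0^2-\tfrac12\bigl(Y_0+\bi a^\top\bi Y\bigr)^2+o_p(1)$.

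Finally I would insert the degeneracy $Y_0=\bi a^\top\bi Y+o_p(1)$, which turns the cross term into $\tfrac12(2\bi a^\top\bi Y)^2=2(\bi a^\top\bi Y)^2$ and reduces the whole expression to
\[
-2\log R=\sum_{k=1}^{K}Y_k^2-(\bi a^\top\bi Y)^2+o_p(1)=\bi Y^\top\bigl(I_K-\bi a\bi a^\top\bigr)\bi Y+o_p(1).
\]
Since $\|\bi a\|=1$, the matrix $I_K-\bi a\bi a^\top$ is the orthogonal projection onto the complement of $\bi a$, with rank $K-1$ and eigenvalues in $\{0,1\}$; hence for $\bi Y\sim N(0,I_K)$ the quadratic form is exactly $\chi^2_{K-1}$, and $-2\log R\stackrel{d}{\rightarrow}\chi^2_{K-1}$. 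I expect the main obstacle to be the rigorous control of the jackknife remainders $r_i,r^{k}_l$ and of the degenerate limiting covariance: one must verify that the remainders are $o_p$ uniformly enough not to disturb the $O_p(n^{-1/2})$ multiplier bounds, and---the conceptual crux---that the limiting covariance is exactly rank-deficient in the direction $\bi a$, since this singularity, caused by the overlap of the pooled and within-sample information, is what yields $K-1$ rather than $K$ degrees of freedom.
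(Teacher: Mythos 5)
Your proposal is correct, and while its skeleton (Lagrange-multiplier expansion, orders $\tilde\lambda=O_p(n^{-1/2})$, quadratic approximation of $-2\log R$ at the profiled $\tilde\theta$) coincides with the paper's, your endgame is genuinely different. The paper keeps all $K+1$ statistics $W_{0n},\dots,W_{Kn}$, proves a CLT for the vector with the \emph{singular} covariance $\bi\Sigma$ (using its Lemma 6.4, $\mbox{Cov}(U_n,U_{n_k})=4\sigma^2_{g_k}/n+O(1/n^2)$), writes $-2\log R$ as a quadratic form with matrix $\bi A^T\bi{\mathcal{W}}\bi A$, and identifies the limit as a weighted chi-square whose weights are the eigenvalues of $\bi\Sigma_0^{1/2}\bi A^T\bi{\mathcal{W}_0}\bi A\bi\Sigma_0^{1/2}$; the degrees of freedom then come out of the algebraic facts $\bi A^T\bi{\mathcal{W}_0}\bi A=\bi A$ and that the eigenvalues of $\bi\Sigma_0\bi A$ are $\{0,0,1,\dots,1\}$. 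You instead extract the degeneracy up front via the H\'ajek-projection identity
\begin{equation*}
n(U_n-\theta_0)=\sum_{k=1}^K n_k(U_{n_k}-\theta_0)+o_p(\sqrt n),
\end{equation*}
i.e.\ $Y_0=\bi a^\top\bi Y+o_p(1)$ with $\|\bi a\|=1$, substitute it into the minimized quadratic, and land on the idempotent form $\bi Y^\top(I_K-\bi a\bi a^\top)\bi Y$ with $\bi Y\stackrel{d}{\to}N(0,I_K)$. Your identity is exactly the paper's singularity of $\bi\Sigma$ in disguise (under $H_0$ the first row of $\bi\Sigma$ equals the sum of the remaining rows, equivalently $W_{0n}(\bi\eta_0)=\sum_{k=1}^K W_{kn}(\bi\eta_0)+o_p(n^{-1/2})$), and your profiled minimizer $\hat\theta$ matches the paper's expansion $\tilde\theta-\theta_0=\sum_{k=0}^K\frac12 W_{kn}(\bi\eta_0)+o_p(n^{-1/2})$, so the two arguments are consistent; I checked your quadratic-minimization value and the final cancellation $\sum_k Y_k^2-(\bi a^\top\bi Y)^2$, and both are right. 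What each buys: your route is more elementary and transparent---it makes visible that one degree of freedom is lost to profiling $\theta$ and one to the linear degeneracy between the pooled and within-sample statistics, and it avoids diagonalizing the non-symmetric product $\bi\Sigma_0\bi A$; the paper's route is more mechanical and robust, since the covariance-plus-eigenvalue machinery would still deliver a (weighted chi-square) limit if the group variances $\sigma^2_{g_k}$ were unequal, whereas your clean projection structure relies on the common variance forced by $H_0$. The only points you should make rigorous in a write-up are the ones you already flag: uniform negligibility of the jackknife remainders $r_i,r^k_l$ (the paper invokes Shi (1984) for the approximate independence of pseudo-values) and the justification that the JEL profile $\tilde\theta$ agrees asymptotically with the minimizer of the quadratic, which follows from linearizing the last equation in (\ref{findtheta}).
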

\begin{proof}
See the Appendix. 
\end{proof}

As a special case of the $K$-sample test, the following result holds for $K=2$.
\begin{corollary}\label{wilk-2sample}
For the two-sample problem, under the conditions \textbf{C1-C2} and $H_0$, we have 
$$ -2 \log  R\stackrel{d}{\rightarrow} \chi^{2}_1, \;\;\;\text{as $n \to \infty$}.$$
\end{corollary}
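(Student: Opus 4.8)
The plan is to reduce $-2\log R$ to a quadratic form in the (asymptotically Gaussian) group scores by a two-stage profiling, and then to recognize that form as a rank-$(K-1)$ projection of a standard Gaussian vector, which immediately yields $\chi^2_{K-1}$. First I would profile out the Lagrange multipliers: for each fixed $\theta$ in an $n^{-1/2}$-neighborhood of $\theta_0$, the pooled block and each of the $K$ group blocks in (\ref{findtheta}) are ordinary one-sample empirical-likelihood problems for a mean, decoupled from one another. The third equation of (\ref{findtheta}) is precisely the envelope (stationarity) condition obtained by differentiating the profiled log-ratio in $\theta$, since the terms carrying $d\lambda_k/d\theta$ drop by the block constraints; hence solving the full system is equivalent to expressing $\tilde\lambda=\lambda(\tilde\theta)$ and $\tilde\lambda_k=\lambda_k(\tilde\theta)$ through their blocks and then minimizing over $\theta$. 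Existence of the solution in the relevant neighborhood is supplied by Lemma \ref{Liu5.5}.

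The second ingredient is the asymptotic linear representation of the jackknife pseudo-values via the Hoeffding decomposition of the degree-two $U$-statistics. I would establish $n^{-1/2}\sum_{i=1}^n(\hat V_i-\theta_0)=2n^{-1/2}\sum_{i=1}^n(g(\bi X_i)-\theta_0)+o_p(1)$ and, for each block, $n_k^{-1/2}\sum_{l=1}^{n_k}(\hat V^k_l-\theta_0)=2n_k^{-1/2}\sum_{l=1}^{n_k}(g_k(\bi X^k_l)-\theta_0)+o_p(1)$, together with $n_k^{-1}\sum_l(\hat V^k_l-U_{n_k})^2\to 4\sigma^2_{g_k}$ and its pooled analogue $\to 4\sigma^2_{g}$. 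The decisive observation is that under $H_0$ one has $g_k\equiv g$ and $\Delta_k=\Delta=\theta_0$, so writing $T_k:=2n_k^{-1/2}\sum_l(g(\bi X^k_l)-\theta_0)$, which are asymptotically independent $N(0,4\sigma^2)$ with $\sigma^2:=\sigma^2_{g}$, we get $\sqrt{n_k}(U_{n_k}-\theta_0)=T_k+o_p(1)$ while $\sqrt{n}(U_n-\theta_0)=\sum_{k=1}^K\sqrt{\hat\alpha_k}\,T_k+o_p(1)$. Thus the pooled score is asymptotically a fixed linear combination of the group scores, which is the source of the degrees-of-freedom reduction.

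Third, a standard Owen-type expansion of each block gives, uniformly for $\theta=\theta_0+O_p(n^{-1/2})$, that $2\sum_l\log\{1+\lambda_k(\theta)(\hat V^k_l-\theta)\}=n_k(U_{n_k}-\theta)^2/(4\sigma^2)+o_p(1)$ and $2\sum_i\log\{1+\lambda(\theta)(\hat V_i-\theta)\}=n(U_n-\theta)^2/(4\sigma^2)+o_p(1)$, using $\hat\sigma^2_{\cdot}\to4\sigma^2$ under $H_0$; here one also verifies $\tilde\lambda,\tilde\lambda_k=O_p(n^{-1/2})$. Minimizing the resulting sum over $\theta$ is a weighted least-squares problem with an explicit solution. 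Substituting the representation of the previous step and writing $Z_k:=T_k/(2\sigma)\stackrel{d}{\to}N(0,1)$, $\bi Z=(Z_1,\dots,Z_K)^\top$, and $\bi v=(\sqrt{\alpha_1},\dots,\sqrt{\alpha_K})^\top$ with $\|\bi v\|=1$, the minimized value collapses to $\|(\mathbf{I}-\bi v\bi v^\top)\bi Z\|^2+o_p(1)$ with $\bi Z\stackrel{d}{\to}N(\mathbf{0},\mathbf{I}_K)$, the pooled residual vanishing identically because its score lies in the span of $\bi v$. Since $\mathbf{I}-\bi v\bi v^\top$ is an orthogonal projection of rank $K-1$, the limit is $\chi^2_{K-1}$; Corollary \ref{wilk-2sample} is the specialization $K=2$.

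The main obstacle is the coupled nature of the $K+2$ equations in (\ref{findtheta}): the blocks are tied only through the common $\theta$, so I must establish $\tilde\theta-\theta_0=O_p(n^{-1/2})$ and $\tilde\lambda,\tilde\lambda_k=O_p(n^{-1/2})$ simultaneously before any Taylor expansion is legitimate, which requires controlling the joint system rather than each block in isolation. The subtler point, and the one responsible for $K-1$ rather than the naive $K$ degrees of freedom, is the exact asymptotic collinearity between the pooled and group scores under $H_0$; making this rigorous rests on the jackknife/Hoeffding representation of the pooled $U$-statistic in terms of the per-group first projections, which uses condition \textbf{C1} to keep the leading projections non-degenerate and \textbf{C2} to guarantee the weights $\hat\alpha_k$ converge.
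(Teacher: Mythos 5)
Your proposal is correct, and at the top level it treats the corollary exactly as the paper does, namely as the $K=2$ specialization of Theorem \ref{wilk-ksample}; but your route through the theorem itself is genuinely different in execution. The paper linearizes the full $(K+2)$-dimensional system (\ref{findtheta}) at $\bi \eta_0$, inverts the matrix $\bi{\mathcal{B}}$ to obtain the joint expansion of $(\tilde\lambda,\tilde\lambda_1,\dots,\tilde\lambda_K,\tilde\theta)$, writes $-2\log R$ as a quadratic form $\bi A^T\bi{\mathcal{W}}\bi A$ in the $(K+1)$-vector $(\sqrt n W_{0n},\dots,\sqrt n W_{Kn})$, invokes the CLT for jackknife pseudo-values (Shi, 1984) with the degenerate covariance $\bi\Sigma$, and finishes by computing the eigenvalues of $\bi\Sigma_0\bi A$ to be $\{0,0,1,\dots,1\}$ with trace $K-1$. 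You instead profile out the multipliers block by block (correctly identifying the last equation of (\ref{findtheta}) as the envelope condition in $\theta$), apply the one-sample Owen expansion to each block, and minimize the resulting weighted least-squares criterion explicitly; your minimizer satisfies $\tilde\theta-\theta_0=\tfrac12\bigl[(U_n-\theta_0)+\sum_{k=1}^K\hat\alpha_k(U_{n_k}-\theta_0)\bigr]+o_p(n^{-1/2})$, which agrees exactly with the expression the paper extracts from $\bi{\mathcal{B}}^{-1}$. The decisive collinearity $\sqrt n(U_n-\theta_0)=\sum_{k=1}^K\sqrt{\hat\alpha_k}\,T_k+o_p(1)$, which you obtain from the Hoeffding first projections together with $g_k\equiv g$ under $H_0$, is precisely what the paper encodes implicitly in the rank structure of $\bi\Sigma_0\bi A$: its two zero eigenvalues correspond to your profiled $\theta$ and to this collinearity. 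What your route buys: you work with a nondegenerate $K$-dimensional Gaussian $\bi Z$ and identify the limit directly as $\|(\mathbf{I}-\bi v\bi v^T)\bi Z\|^2$ with $\mathbf{I}-\bi v\bi v^T$ an orthogonal projection of rank $K-1$, so the degrees of freedom are structural rather than the output of an eigenvalue computation, and you need only the classical CLT for the i.i.d. projections rather than the Shi (1984) result for pseudo-values. What the paper's route buys: the joint linearization delivers the $O_p(n^{-1/2})$ rates for $\tilde\theta$ and all multipliers simultaneously from Lemma \ref{Liu5.5} and the nonsingularity of $\bi{\mathcal{B}}$, which is exactly the point you rightly flag as the main obstacle of the profiling approach, where the rate for $\tilde\theta$ must first be bootstrapped from the local quadratic behavior of the profiled objective before the blockwise expansions become legitimate. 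Since you address that point, and your moment and uniformity requirements are at the same level of rigor as the paper's (both implicitly use square-integrability of the kernel in the Hoeffding decomposition), I see no gap.
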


\begin{remark}
Compared with the result of \cite{Wan2018}, the limiting distribution of the proposed empirical log-likelihood ratio is a standard chi-squared distribution. The empirical log-likelihood has no need for multiplying a factor to adjust unbalanced sample sizes. 
\end{remark}

\begin{remark}
Our JEL approach considers energy distance of $\bi X^k$ and $\bi X$, while the JEL method in Wan {\em et al}. \cite{Wan2018} utilizes energy distance of between classes $\bi X^k$ and $\bi X^j$. For $K\geq 3$, they need to deal with $K(K+1)/2$ constraints, the number much larger than $K+1$ of ours in (\ref{constraint}). 
\end{remark}

With Therorm \ref{wilk-ksample}, we reject $H_0$ if the observed jackknife empirical likelihood $-2\log \hat{R} $ is greater than $\chi^2_{K-1}(1-\alpha)$, where $\chi^2_{K-1}(1-\alpha)$ is the $100(1-\alpha)\%$ quantile of $\chi^2$ distribution with $K-1$ degrees of freedom. The $p$-value of the test can be calculated by
 $$ p\mbox{-value} = P_{H_0}(\chi^2_{K-1} > -2 \log \hat{R} ), $$
 and the power of the test is 
$$ \mbox{power} = P_{H_a} (-2 \log R > \chi^2_{K-1}(1-\alpha)). $$
In the next theorem, we establish the consistence of the proposed test, which states  its power is tending to 1 as the sample size goes to infinity. 
\begin{theorem}\label{consis:test}
Under the conditions \textbf{C1} and\textbf{C2}, the proposed JEL test for the K-sample problem is consistent for any fixed alternative. That is,
$$ P_{H_a} (-2 \log R > \chi^2_{K-1}(1-\alpha)) \rightarrow 1, \;\;\;\text{as $n \to \infty$}. $$
\end{theorem}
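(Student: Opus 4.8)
The plan is to show that under any fixed alternative the statistic $-2\log R$ diverges to $+\infty$ in probability. Since the critical value $\chi^2_{K-1}(1-\alpha)$ is a fixed finite constant, this immediately gives $P_{H_a}(-2\log R > \chi^2_{K-1}(1-\alpha)) \to 1$, which is the claim. The first step is to record the variational form of the statistic. For a fixed $\theta$, the maximization in \eqref{Optfun} decouples across the $K+1$ pseudo-value samples, each subproblem being the ordinary one-sample empirical likelihood for the mean. Writing $\ell_0(\theta)=2\sum_{i=1}^n \log\{1+\lambda(\theta)(\hat V_i-\theta)\}$ for the profile EL of the pooled pseudo-values and $\ell_k(\theta)$ analogously for group $k$, one obtains
\[
-2\log R = \min_{\theta}\Big[\ell_0(\theta)+\sum_{k=1}^K \ell_k(\theta)\Big],
\]
where the minimizer is $\tilde\theta$ and the third equation of \eqref{findtheta} is precisely its first-order condition. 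Because each term is nonnegative, for any single index $k^\ast$ we get the lower bound
\[
-2\log R \;\ge\; \min_{\theta}\big[\ell_0(\theta)+\ell_{k^\ast}(\theta)\big]\;\ge\;\min_{\theta}\max\{\ell_0(\theta),\,\ell_{k^\ast}(\theta)\}.
\]

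Next I would separate the pooled and one group mean. Under $H_a$ the characterization $S=0 \iff F_1=\cdots=F_K$ forces $S=\Delta-\sum_k\alpha_k\Delta_k>0$, so $\Delta_{k^\ast}\neq \Delta$ for at least one index $k^\ast$; fix such a $k^\ast$ and put $\delta=|\Delta-\Delta_{k^\ast}|>0$. By the strong law for $U$-statistics, $U_n\to\Delta$ and $U_{n_{k^\ast}}\to\Delta_{k^\ast}$ almost surely, and since the pseudo-value averages coincide with the corresponding $U$-statistics, the pooled and group-$k^\ast$ sample means of the pseudo-values separate: $|U_n-U_{n_{k^\ast}}|\to\delta>0$.

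The core step is a confidence-region argument. Under \textbf{C1} the relevant pseudo-value samples have finite asymptotic variance (the same expansions used in the proof of Theorem \ref{wilk-ksample}), so one-sample EL theory gives that for every fixed $c>0$ the level set $\{\theta:\ell_0(\theta)\le c\}$ is, with probability tending to $1$, contained in an interval of radius $O_p(n^{-1/2})$ centered at $U_n$, and likewise $\{\theta:\ell_{k^\ast}(\theta)\le c\}$ concentrates around $U_{n_{k^\ast}}$. Hence any $\theta$ with $\max\{\ell_0(\theta),\ell_{k^\ast}(\theta)\}\le c$ must lie within $O_p(n^{-1/2})$ of both $U_n$ and $U_{n_{k^\ast}}$, which is impossible for large $n$ since these two means are separated by $\delta>0$. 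Therefore $\min_\theta\max\{\ell_0(\theta),\ell_{k^\ast}(\theta)\}>c$ with probability tending to $1$, so $-2\log R>c$ with probability tending to $1$; as $c$ is arbitrary, $-2\log R \stackrel{P}{\to}\infty$, which completes the argument.

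I expect the principal technical point to be the uniform $n^{-1/2}$-shrinkage of the JEL level sets for the \emph{non-i.i.d.} jackknife pseudo-values, i.e., justifying that the EL confidence interval for each pseudo-value mean behaves like the classical one. This rests on the Hoeffding projection showing the pseudo-values are asymptotically equivalent to i.i.d. summands, which is already the workhorse of Theorem \ref{wilk-ksample}; reducing the joint minimization over $\theta$ to the two-term $\max$ lower bound is what keeps the remaining argument elementary and avoids analyzing $\tilde\theta$ directly.
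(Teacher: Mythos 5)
Your proposal is correct in strategy but takes a genuinely different route from the paper. The paper's proof is a two-line argument that starts from the quadratic representation \eqref{decomposition} of $-2\log R$ obtained while proving Theorem \ref{wilk-ksample}, writes $U_{n_k}-\tilde{\theta}=(U_{n_k}-\theta_k)+(\theta_k-\tilde{\theta})$, and concludes that at least one standardized term $n_k(\theta_k-\tilde{\theta})^2/\tilde{S}_k^2$ must diverge because the single value $\tilde{\theta}$ cannot be close to all the group means and the pooled mean simultaneously. That argument is short but leans on \eqref{decomposition}, which was derived under $H_0$ (its ingredients, notably Lemma \ref{Liu5.5} and the expansions for $\tilde{\lambda},\tilde{\lambda}_k$, all assume $H_0$), so its use under $H_a$ is not actually justified in the paper. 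Your route avoids exactly this weak point: you use only the exact variational identity $-2\log R=\min_{\theta}\bigl[\ell_0(\theta)+\sum_{k=1}^K\ell_k(\theta)\bigr]$ (with the third equation of \eqref{findtheta} as its first-order condition), drop all but two nonnegative terms, and reduce consistency to the statement that the one-sample JEL level sets of the pooled sample and of a group $k^\ast$ with $\Delta_{k^\ast}\neq\Delta$ concentrate around means separated by $\delta>0$. This is the same underlying heuristic as the paper's (one $\tilde{\theta}$ cannot track two separated means) but made into a self-contained lower-bound argument; your choice to compare group $k^\ast$ against the \emph{pooled} sample also handles cleanly the case where all $\Delta_k$ coincide with each other but differ from $\Delta$, which the paper's phrasing (``at least one of $\E U_k$ will be different from the others'') glosses over. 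Two small remarks: your claimed $O_p(n^{-1/2})$ radius for the level sets is stronger than needed --- you only need the radius to be $o_p(1)$ (indeed eventually smaller than $\delta/3$), which follows from a dual lower bound $\ell(\theta)\geq\max_{\lambda}2\sum_i\log\{1+\lambda(\hat{V}_i-\theta)\}$ evaluated at a fixed small $\lambda$, together with $\frac{1}{n}\sum_i(\hat{V}_i-\theta)^2=O_p(1)$ and $\max_i|\hat{V}_i-\theta|=o_p(n)$, and this weaker form is easier to justify for the non-i.i.d.\ pseudo-values than the full $n^{-1/2}$ shrinkage; and under $H_a$ the pooled sample is a stratified mixture rather than i.i.d.\ from $F$, so the convergence $U_n\to\Delta$ should be argued by splitting $U_n$ into within- and between-group $U$-statistics, a point worth one sentence in a final write-up. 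With those details filled in, your argument is, if anything, on firmer footing than the published proof.
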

\begin{proof}
See the Appendix. 
\end{proof}

\section{Simulation Study}
In order to assess the proposed JEL method for the homogeneity testing, we conduct extensive simulation studies in this section. We compare the following methods.
\begin{description}
\item[JEL-S:]our proposed JEL method.  R package ``dfoptim" \cite{Varadhan18} is used for solving the equation system of (\ref{findtheta}). 
\item[JEL-W:] the JEL approach proposed in \cite{Wan2018}. It is applied only for $K=2$. 
\item[ET:] the DISCO test of \cite{Rizzo2010}. Its null limiting distribution of the test statistic depends on the underlying distribution and hence the test is implemented by the permutation procedure. Function ``eqdist.etest" with the default number of replicates in  R package ``energy" is used \cite{Rizzo17}. 
\item[AD:] the Anderson-Darling test of \cite{Anderson52, Scholz1987}. The procedure ``ad.test" in R package ``kSamples" is used \cite{Scholz2018}.  
\item[KW:] the Kruskal-Wallis test of \cite{Kruskal1952, Puri71} implemented in R package ``kSamples". 
\item[HHG:] the HHG test of \cite{Heller2013, Heller2016}. The test is performed by a permutation procedure that is implemented in R package ``HHG" \cite{Brill2017}. 
\end{description}  

Type I error rates and powers for each method at significance levels $\alpha=0.05$ and $\alpha=0.10$ are based on 10,000 replications. The results at significance level $\alpha=0.10$ are similar to the results at 0.05 level and hence are not presented. We only consider one case of $K=2$ to demonstrate the similarity of our JEL-S and JEL-W. The remaining cases are for $K=3$ without loss of  generality. We generate univariate ($d=1$) and multivariate ($d=3$, $d=6$) random samples from normal, heavy-tailed $t$ and asymmetric exponential distributions. In each distribution, samples of balanced and unbalanced sample sizes are generated.   

\subsection{Normal distributions}

We first compare our JEL-S with JEL-W, which is also a JEL approach based on energy statistics but designed for the two-sample problem.  We generate two independent samples with either equal  ($n_1=n_2=50$) or unequal sample sizes ($n_1=40, n_2=60$) from the $d$-dimensional normal distributions $N(\bi 0_d, \bi I_{d\times d})$ and $N(\bi 0_d, \delta_1 \bi I_{d\times d})$, respectively, where $\bi 0_d$ is the $d$-dimensional zero vector, $\bi I_{d\times d}$ is the identity matrix in $d$ dimension and $\delta_1$ is a positive number to specify the difference of scales. The results are displayed in Table \ref{tab-k=2}.

\begin{table}[H]
\centering
\caption{Type I error rates and powers of tests for normal distributions with different scales for $K=2$ under the significance level $\alpha=0.05$.  \label{tab-k=2}}
\vspace*{0.1in}
\setlength\tabcolsep{2.5pt}
\small
\begin{tabular}{llcccccccccccr}\hline\hline
           &   &   \multicolumn{6}{c}{$n_1=n_2=50$}&
\multicolumn{6}{c}{$n_1=40$, $n_2=60$}\\  
 $d$           & $\delta_1$  & JEL-S &JEL-W &ET & AD & KW & HHG  &  \  \ \ \ \ JEL-S &JEL-W& ET & AD & KW&HHG  \\   \hline



1&1                           &.053&.052&.046&.052&.051&.050 & \  \ \ \ \ .048&.046&.044&.048&.048&.051\\
&1.5                       &.753&.773&.224&.250&.053&.542 & \  \ \ \ \ .733&.747&.172&.194&.042&.489\\
&2.0                         &.986&.988&.755&.762&.060&.961 & \  \ \ \ \ .984&.983&.667&.685&.044&.955\\

&2.5                       &.992&.996&.970&.969&.067&1.00  & \  \ \ \ \  .993&.995&.950&.950&.045&.999\\
&3.0                          &.9961&.995&.998&.997&.069&1.00 & \  \ \ \ \  .995&.997&.996&.995&.045&1.00\\ \hline

3&1                           &.045&.048&.048&.050&.049&.049 & \  \ \ \ \ .048&.047&.043&.048&.049&.049\\
&1.5                         &.654&.653&.104&.211&.051&.551 & \  \ \ \ \ .618&.626&.082&.176&.047&.508\\
&2.0                          &.967&.964&.347&.673&.051&.963  & \  \ \ \ \ .965&.964&.252&.609&.041&.946\\
&2.5                       &.988&.988&.697&.944&.055&.999 & \  \ \ \ \ .990&.992&.582&.917&.043&.998\\
&3.0                        &.993&.990&.911&.994&.054&1.00  & \  \ \ \ \ .993&.991&.836&.990&.041&1.00\\ \hline


6&1                           &.047&.048&.041&.048&.048&.047 & \  \ \ \ \ .043&.043&.043&.048&.049&.047\\

&1.5                          &.910&.911&.145&.472&.050&.896 & \  \ \ \ \ .891&.894&.107&.426&.045&.875\\

&2.0                         &.990&.989&.612&.977&.056&1.00 & \  \ \ \ \ .993&.990&.487&.970&.041&1.00\\
&2.5                           &.996&.996&.948&1.00&.060&1.00  & \  \ \ \ \ .995&.994&.893&1.00&.040&1.00\\
&3.0                          &.996&.995&.998&1.00&.058&1.00  & \  \ \ \ \ .996&.995&.994&1.00&.040&1.00\\ \hline\hline
\end{tabular}
\end{table}

As expected, the JEL-W  and our approach perform similarly because both are JEL approach on energy distance to compare two samples.   Advantages of the JEL approach over the others in testing scale differences are the same for $K=3$, which is demonstrated in the following simulation. 
 
Three random samples ${\cal A}_1$, ${\cal A}_2$ and ${\cal A}_3$ are simulated  from normal distributions of $N(\bi 0, \bi I_{d \times d})$,  $N(\bi 0, \delta_2 \bi I_{d\times d})$ and  $N(\bi 0,  \delta_3\bi I_{d \times d})$ respectively, where $\delta_2$ and $\delta_3$ are positive numbers.  The simulation result is shown  in Table \ref{tab-scale}.

\begin{table}[H]
\begin{center}
\begin{small}
\caption{sizes and powers of tests for normal distributions with different scales for $K=3$ under the significance level $\alpha=0.05$.  \label{tab-scale}}
\vspace*{0.1in}
\setlength\tabcolsep{4.5pt}
\begin{tabular}{llcccccccccr}\hline\hline
               & &   \multicolumn{5}{c}{$ n_1=n_2=n_3=50$}&
\multicolumn{5}{c}{$n_1=40,n_2=60,n_3=50$}\\ 
           
$d$&$(\delta_2, \delta_3)$ & JEL-S &ET & AD & KW & HHG & \  \ \ \ \  JEL-S & ET & AD & KW&HHG  \\ \hline

1&(1, 1)                        &.058&.046&.053&.050&.052   & \  \ \ \ \ .057&.044&.050&.048&.050\\
&(1.1, 1.5)                     &.749&.181&.206&.052&.478  & \  \ \ \ \ .718&.164&.190&.051&.400\\
&(1.2, 2.0)                    &.988&.659&.680&.055&.943 & \  \ \ \ \ .985&.625&.638&.057&.892\\
&(1.3, 2.5)                    &.995&.936&.936&.059&.998& \  \ \ \ \ .993&.916&.914&.056&.991\\
&(1.4, 3.0)                    &.995&.994&.992&.062&.999& \  \ \ \ \ .994&.989&.985&.054&.999\\\hline


3&(1, 1)                   &.057&.045&.051&.051&.050 & \  \ \ \ \ .053&.043&.052&.048&.051\\
&(1.1, 1.5)                  &.622&.093&.164&.051&.474.   & \  \ \ \ \ .599&.088&.149&.049&.398\\
&(1.2, 2.0)             &.975&.268&.569&.050&.936  & \  \ \ \ \ .969&.245&.529&.052&.878\\
&(1.3, 2.5)                    &.998&.548&.877&.052&.996 & \  \ \ \ \ .998&.508&.847&.054&.988\\
&(1.4, 3.0)                    &.999&.808&.978&.057&.999   & \  \ \ \ \ .999&.756&.965&.054&.999\\\hline


6&(1, 1)                        &.053&.045&.055&.056&.048 & \  \ \ \ \ .055&.046&.050&.050&.049\\
&(1.1, 1.5)                  &.906&.124&.369&.049&.852  & \  \ \ \ \ .885&.114&.337&.049&.774\\
&(1.2, 2.0)                  &.999&.465&.942&.052&.999 & \  \ \ \ \ 1.00&.416&.924&.048&.997\\
&(1.3, 2.5)                    &1.00&.857&1.00&.055&1.00 & \  \ \ \ \ 1.00&.818&.998&.049&1.00\\
&(1.4, 3.0)                    &1.00&.985&1.00&.055&1.00 & \  \ \ \ \ .999&.971&1.00&.050&1.00\\\hline\hline

\end{tabular}
\end{small}
\end{center}
\end{table}

In Table \ref{tab-scale}, the size of tests are given in the rows of $(1,1)$ and the powers in other rows.   We can see that every method maintains the nominal level well.  As expected, KW performs badly for scale differences because KW is a nonparametric one-way ANOVA on ranks and it is inconsistent for scale-difference problem.  Although ET and AD are consistent, they are less powerful than the JEL method and HHG.  The JEL method always has the highest power among the all considered tests.

Next, we consider the location difference case. Three random samples ${\cal A}_1$, ${\cal A}_2$ and ${\cal A}_3$ are simulated  from normal distributions of $N(\bi 0, \bi I_{d \times d})$, $N(\delta_4 \bi 1_{d},   \bi I_{d \times d})$ and $N( \delta_5 \bi 1_{d}, \bi I_{d \times d})$, respectively. Here $\bi 1_{d}$ is the $d$-vector with all elements being 1.  The sizes of the tests are reported in the rows of $(0,0)$ in Table \ref{tab-location} and the others rows provide the powers of the tests. 

\begin{table}[H]
\begin{center}
\begin{small}
\caption{Type I error rates and powers of tests for normal distributions with different locations under the significance level $\alpha=0.05$.\label{tab-location}}
\vspace*{0.1in}
\setlength\tabcolsep{4.5pt}
\begin{tabular}{llcccccccccr}\hline\hline
               & &   \multicolumn{5}{c}{$n_1=n_2=n_3=50$}&
\multicolumn{5}{c}{$n_1=40$, $n_2=60$, $n_3=50$}\\ 
$d$ &$(\delta_4, \delta_5)$ & JEL-S &ET & AD & KW & HHG & \  \ \ \ \  JEL-S & ET & AD & KW&HHG  \\ \hline

1&(0, 0)      &.057 &.042  &.046 &.045  &.046               & \  \ \ \ \ .065&.043  &.049&.047&.051 \\
&(0.2, 0.4) &.060 &.362 &.386  &.395 &.252               & \  \ \ \ \ .063&.333  &.347&.348 &.170\\
&(0.4, 0.8) &.074  &.930 &.940 &.943 &.822                & \  \ \ \ \ .068&.907  &.911&.913&.648\\
&(0.6, 1.2)  &.175 &1.00 &1.00  &1.00 &.994               & \  \ \ \ \ .141&.999 &.999&.999 &.971\\
&(0.8, 1.6)  &.569 &1.00  &1.00 &1.00 &1.00                & \  \ \ \ \ .457&1.00  &1.00&1.00&1.00 \\
 \hline


3&(0, 0)      &.056 &.049  &.052&.052  &.050&. \  \ \ \ \ 048&.042  &.052&.052&.049 \\
&(0.2, 0.4) & .059 &.690  &.857&.867&.504& \  \ \ \ \ .050&.645  &.818&.823 &.319\\
&(0.4, 0.8) &.105&1.00 &1.00 &1.00&.998& \  \ \ \ \ .085&1.00  &1.00&1.00&.975\\
&(0.6, 1.2)  &.607&1.00 &1.00  &1.00 &1.00& \  \ \ \ \ .432&1.00  &1.00&1.00 &1.00\\
&(0.8, 1.6)  & .996&1.00&1.00 &1.00 &1.00& \  \ \ \ \ .932&1.00  &1.00&1.00&1.00 \\
 \hline
 

6&(0, 0)      & .056&.046  &.050&.049  &.052& \  \ \ \ \ .056&.047  &.048&.049&.055 \\
&(0.2, 0.4) & .055&.915 &.995  &.996&.680& \  \ \ \ \ .060&.878  &.983&.985 &.452\\
&(0.4, 0.8) &.189&1.00 &1.00 &1.00&1.00& \  \ \ \ \ .143&1.00  &1.00&1.00&.999\\
&(0.6, 1.2)  &.985&1.00 &1.00  &1.00 &1.00& \  \ \ \ \ .926&1.00  &1.00&1.00 &1.00\\
&(0.8, 1.6)  & .999&1.00&1.00 &1.00 &1.00& \  \ \ \ \ .999&1.00  &1.00&1.00&1.00 \\
 \hline\hline

\end{tabular}
\end{small}
\end{center}
\end{table}

The Type I error rates of all tests are close to the nominal level. The JEL-S performs the worst with the lowest power in this case, although it is consistent for any alternatives. An intuitive interpretation is that the JEL assigns more weights on the sample points lying between classes and loses power to differentiate classes. The phenomenon of less power in the location-difference problem is also common for the density approach, as mentioned in \cite{Martinez09}.   For the location difference problem, we suggest to use non-parametric tests based on distribution function approaches.  For example,  AD and KW tests are recommended.  

Our JEL-S has low powers to test location differences, it, however, is sensitive to detect scale-location changes.   Three random samples ${\cal A}_1$, ${\cal A}_2$ and ${\cal A}_3$ are simulated  from normal distributions $N(\bi 0, \bi I_{d \times d})$, $N(\delta_6 \bi 1_{d},  \delta_8 \bi I_{d \times d})$ and $N( \delta_7 \bi 1_{d}, \delta_9\bi I_{d \times d})$,  respectively. Here $\delta_i, i=6,...,9$ measure the difference of locations and scales. The simulation results are reported in Table \ref{tab-scale-location}.

\begin{table}[H]
\begin{center}
\begin{small}
\caption{Type I error rates and powers of tests for normal distributions with different locations and scales under the significance level $\alpha=0.05$. \label{tab-scale-location}}
\setlength\tabcolsep{3.5pt}
\begin{tabular}{llcccccccccr}\hline\hline
               & &   \multicolumn{5}{c}{$n_1=n_2=n_3=50$}&
\multicolumn{5}{c}{$n_1=40, n_2=60, n_3=50$}\\ 

$d$&$(\delta_6, \delta_7, \delta_8, \delta_9)$ & JEL-S &ET & AD & KW & HHG & \  \ \ \ \  JEL-S & ET & AD & KW&HHG  \\ \hline


1&(0, 0, 1, 1)                           &.061&.047 &.051& .051& .051  & \  \ \ \ \ .061&.044 &.049& .048& .049 \\
&(0.1, 0.2, 1.2, 1.4)             &.540&.173  &.195&.102 &.335 & \  \ \ \ \ .497&.153  &.163&.091 &.208\\
&(0.2, 0.4, 1.4, 1.8)             &.952&.570  &.603 &.208&.841 & \  \ \ \ \ .933&.504  &.510 &.185&.630\\
&(0.3, 0.6, 1.6, 2.2)                 &.990&.885  &.893&.336 &.987 & \  \ \ \ \ .989&.827  &.819&.290 &.905\\
&(0.4, 0.8, 1.8, 2.6)               &.988&.981  &.981&.434&.999   & \  \ \ \ \ .990&.964  &.958&.384&.986 \\
 \hline


3&(0, 0, 1, 1)                          &.054&.041 &.048& .049& .046 & \  \ \ \ \ .057&.046 &.051& .051& .050 \\
&(0.1, 0.2, 1.2, 1.4)               &.436&.204  &.348&.260&.405  & \  \ \ \ \ .392&.174  &.300&.230&.240\\
&(0.2, 0.4, 1.4, 1.8)               &.897&.702  &.891 &.723&.926 & \  \ \ \ \ .852&.650  &.849 &.683&.762\\
&(0.3, 0.6, 1.6, 2.2)                 &.993&.970  &.997&.958 &.998 & \  \ \ \ \ .985&.947  &.993&.939 &.976\\
&(0.4, 0.8, 1.8, 2.6)             &.999&.999  &1.00&.996&1.00 & \  \ \ \ \ .997&.997  &1.00&.995&.999 \\
 \hline


6&(0, 0, 1, 1)                           &.054&.048 &.053& .052& .051 & \  \ \ \ \ .051&.044 &.051& .050& .047 \\
&(0.1, 0.2, 1.2, 1.4)               &.729&.313 &.670&.478&.729  & \  \ \ \ \ .668&.265 &.600&.434&.509\\
&(0.2, 0.4, 1.4, 1.8)                &.995&.930  &.999 &.963&.999 & \  \ \ \ \ .990&.886  &.996 &.943&.982\\
&(0.3, 0.6, 1.6, 2.2)                &.999&1.00  &1.00&1.00 & 1.00  &\  \ \ \ \ .999&.999  &1.00&.999 &1.00\\
&(0.4, 0.8, 1.8, 2.6)               &.998&1.00 &1.00&1.00&1.00   &\  \ \ \ \ .999&1.00  &1.00&1.00&1.00 \\
 \hline\hline

\end{tabular}
\end{small}
\end{center}
\end{table}

From Table \ref{tab-scale-location}, we can have the following observations. For $d=1$, KW is the least powerful.  ET and KW perform similar but worse than HHG and JEL-S.  JEL-S has the highest powers. For example, JEL-S is about 20\%-30\% more powerful than the second best HHG method in the case of $(\delta_6, \delta_7, \delta_8, \delta_9)=(0.1,0.2,1.2,1.4)$.   For $d=3$ and $d=6$, ET performs the worst and JEL-S is the most competitive method. 

\subsection{Heavy-tailed distribution: $t(5)$}
We compare the performance of JEL-S with others in the heavy-tailed distributions. Three random samples ${\cal A}_1$, ${\cal A}_2$ and ${\cal A}_3$ are simulated  from multivariate $t$ distributions with 5 degrees of freedom with the same locations $\bi 0$ and different scales $\bi I_{d \times d}, \delta_{10} \bi I_{d \times d}$ and $\delta_{11}\bi I_{d \times d}$, respectively.  The results are reported in Table \ref{tab-t5}.

\begin{table}[H]
\begin{center}
\begin{small}
\caption{Type I error rates and powers of tests for $t_5$ distributions with difference scales under the level $\alpha=0.05$. \label{tab-t5}}
\vspace*{0.05in}
\setlength\tabcolsep{4.5pt}
\begin{tabular}{llcccccccccr}\hline \hline
               & &   \multicolumn{5}{c}{$ n_1=n_2=n_3=50$}&
\multicolumn{5}{c}{$n_1=40$, $n_2=60$, $n_3=50$}\\ 

$d$ &$(\delta_{10}, \delta_{11})$ & JEL-S &ET & AD & KW & HHG & \  \ \ \ \  JEL-S & ET & AD & KW&HHG  \\ \hline


1&(1, 1)                              &.079&.041&.047&.045&.045 & \  \ \ \ \ .079&.043&.051&.047&.052\\
&(1.1, 1.5)                       &.206&.063&.068&.048&.120 & \  \ \ \ \ .196&.060&.067&.051&.104\\
&(1.2, 2.0)                       &.435&.117&.116&.053&.283 & \  \ \ \ \ .405&.104&.105&.047&.229\\
&(1.3, 2.5)                      &.630&.199&.194&.055&.478 & \  \ \ \ \ .615&.189&.181&.052&.389\\
&(1.4, 3.0)                      &.769&.310&.287&.055&.647  & \  \ \ \ \ .750&.271&.250&.051&.530\\
\hline


3&(1, 1)                             &.078&.046&.061&.049&.049 & \  \ \ \ \ .076&.044&.060&.051&.046\\
&(1.1, 1.5)                       &.355&.093&.148&.050&.303 & \  \ \ \ \ .337&.080&.144&.049&.249\\
&(1.2, 2.0)                        &.733&.228&.410&.051&.730 & \  \ \ \ \ .710&.202&.386&.048&.636\\
&(1.3, 2.5)                     &.912&.463&.707&.050&.938 & \  \ \ \ \ .898&.416&.669&.049&.876\\
&(1.4, 3.0)                      &.959&.677&.882&.051&.988 & \  \ \ \ \ .957&.626&.854&.054&.963\\
\hline

6&(1, 1)                             &.076&.040&.079&.050&.052   & \  \ \ \ \ .075&.046&.081&.052&.051\\

&(1.1, 1.5)                 &.442&.121&.332&.049&.457 & \  \ \ \ \ .435&.113&.323&.048&.391\\

&(1.2, 2.0)                      &.844&.396&.795&.056&.912 & \  \ \ \ \ .837&.354&.766&.052&.851\\

&(1.3, 2.5)                     &.953&.711&.965&.046&.993 & \  \ \ \ \ .951&.658&.951&.051&.980\\

&(1.4, 3.0)                   &.977&.900&.996&.051&1.00 & \  \ \ \ \ .976&.866&.992&.050&.998\\
\hline\hline

\end{tabular}
\end{small}
\end{center}
\end{table}

Compared with results of the normal distribution case in Table \ref{tab-scale}, the power of every method in Table \ref{tab-t5} has been impacted by heavy-tailed outliers, while impacts in high dimensions are less than that in one dimension.  JEL-S has a slight over-size problem. Its size is 2-3\% higher than the nominal level, while its power is uniformly the highest among all methods. For the small difference case with $(\delta_8,\delta_{9})=(1,1.5)$, JEL-S is 10\% more powerful than the second best HHG method. 

\subsection{Non-symmetric distribution: Exponential distribution}
Lastly we consider the performance of JEL-S for asymmetric distributions. We generate random samples ${\cal A}_1$, ${\cal A}_2$ and ${\cal A}_3$ from multi-variate exponential distributions with independent components.  The components of each sample are simulated from  exp(1), exp$(\delta_{12})$ and exp$(\delta_{13})$, respectively. Type I error rates  and powers are presented in Table \ref{tab-exp}.

\begin{table}[H]
\begin{center}
\begin{small}
\caption{Type I error rates and powers under exponential distributions with different scales.}\label{tab-exp}
\vspace*{0.05in}
\setlength\tabcolsep{4.5pt}
\begin{tabular}{lccccccccccr}\hline\hline
               & &   \multicolumn{5}{c}{$n_1=n_2=n_3=50$}&
\multicolumn{5}{c}{$n_1=40, n_2=60, n_3=50$}\\ 

$d$&$(\delta_{12}, \delta_{13})$ & JEL-S &ET & AD & KW & HHG &  \  \ \ \ \ JEL-S & ET & AD & KW&HHG  \\ \hline


1&(1, 1)                              &.089&.044&.047&.048&.048 & \  \ \ \ \ .090&.046&.053&.053&.050\\
&(1.1, 1.2)                        &.140&.098&.098&.095&.091 & \  \ \ \ \ .143&.091&.094&.093&.074\\
&(1.2, 1.4)                        &.276&.245&.236&.231&.205 & \  \ \ \ \ .268&.222&.209&.204&.132\\
&(1.3, 1.6)                        &.458&.447&.423&.407&.365 & \  \ \ \ \ .427&.417&.378&.371&.236\\
&(1.4, 2.2)                        &.866&.906&.878&.851&.837 & \  \ \ \ \ .830&.885&.844&.813&.699\\
\hline


3 &(1, 1)                              &.073&.041&.047&.047&.047  & \  \ \ \ \ .074&.049&.052&.049&.051\\
&(1.1, 1.2)                       &.227&.166&.215&.209&.192& \  \ \ \ \ .211&.153&.195&.192&.128\\
&(1.2, 1.4)                       &.563&.544&.626&.603&.571& \  \ \ \ \ .529&.492&.571&.551&.386\\
&(1.3, 1.6)                        &.835&.871&.917&.899&.873  & \  \ \ \ \ .799&.827&.873&.854&.707\\
&(1.4, 2.2)                        &.989&.999&1.00&.999&.999 & \  \ \ \ \ .987&.999&.999&.999&.998\\
\hline


6&(1, 1)                              &.068&.047&.049&.047&.048   & \  \ \ \ \ .072&.045&.052&.051&.052\\
&(1.1, 1.2)                       &.351&.269&.395&.385&.339 & \  \ \ \ \ .332&.245&.373&.361&.217\\
&(1.2, 1.4)                        &.815&.806&.913&.893&.846 & \  \ \ \ \ .772&.763&.886&.866&.682\\
&(1.3, 1.6)                        &.972&.990&.997&.996&.992  & \  \ \ \ \ .954&.977&.996&.993&.949\\
&(1.4, 2.2)                        &.991&1.00&1.00&1.00&1.00 & \  \ \ \ \ .990&1.00&1.00&1.00&1.00\\
\hline\hline

\end{tabular}
\end{small}
\end{center}
\end{table}

From Table \ref{tab-exp}, we observe that JEL-S  suffers slightly from the over-size problem, while the problem becomes less of an issue for higher dimensions. JEL-S performs the best when the differences are small. HHG is inferior to others. Asymmetric exponential distributions with different scales also imply different mean values, and hence KW performs fairly.

\subsection{Summary of the simulation study}
Some conclusions can be drawn across all tables 1-6.  HHG is affected by unbalanced sizes the most among all methods. For example, in Table \ref{tab-scale-location}, the power of HHG is dropped 13\% and 17\% for $d=1$ and $d=3$, respectively,  from the equal size  to the unequal size case, compared with a 3-5\% decrease in other methods.  

Considering the same total size,  the power in balanced sample is higher than unequal size samples for all tests.  All methods share the same pattern of power changes when the dimension changes. For the Normal scale difference cases,  powers in $d=3$ are lower than those in $d=1$ and $d=6$. While for $t(5)$ and exponential distributions, powers increase with $d$.  

Overall, JEL-S is competitive to the current approaches for comparing $K$-samples. Particularly, JEL-S  is very powerful for the scale difference problems and is very sensitive to detect subtle differences among distributions. 


\section{Real data analysis}
For the  illustration purpose, we apply the proposed JEL approach to a multiple two-sample test example. 
We apply the JEL method to the banknote authentication data which is available in UCI Machine Learning Repository (\cite{Lohweg2013}).  The data set consists of 1372 samples with 762 samples of them  from the Genuine class denoted as Gdata and 610 from the Forgery class denoted as Fdata. 
Four features are recorded  from each sample: variance of wavelet transformed image (VW), skewness of wavelet transformed image (SW), kurtosis of wavelet transformed image (KW) and entropy of image (EI). 
One can refer to Lohweg $et$ $al.$ (\cite{Lohweg2013}) and Sang, Dang and Zhao (\cite{Sang2019}) for more descriptions and information of the data.

The densities of each of the variables for each class are drawn in Figure 1. We observe that the distributions of each variable in different classes are quite different, especially for variables VW and SW. The locations of VW in two classes are clearly different. The distribution of SW shows some multimodal trends in both classes. The distribution of KW in Forgery class is more right-skewed than it is in Genuine class. EI of two classes has similar left-skewed distribution. Here we shall compare the multivariate distribution of two classes and also conduct univariate two-class tests on each of four variables.

%
%
%
%
%

\begin{figure}[H]
\centering
\label{densityfig}
\begin{tabular}{cc}
\includegraphics[width=2.3in,height=2.3in]{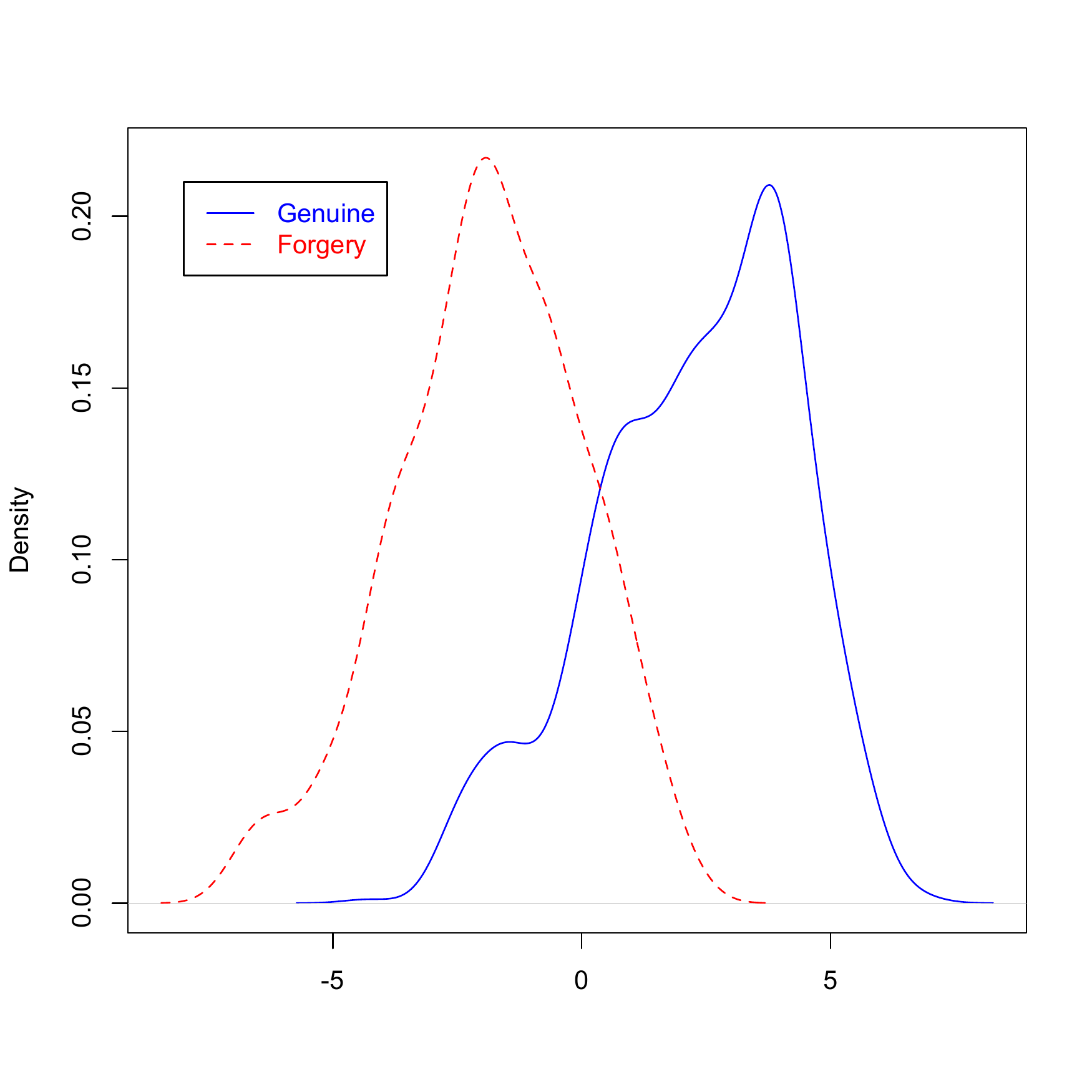} \vspace{-0.1in}&
\includegraphics[width=2.3in,height=2.3in]{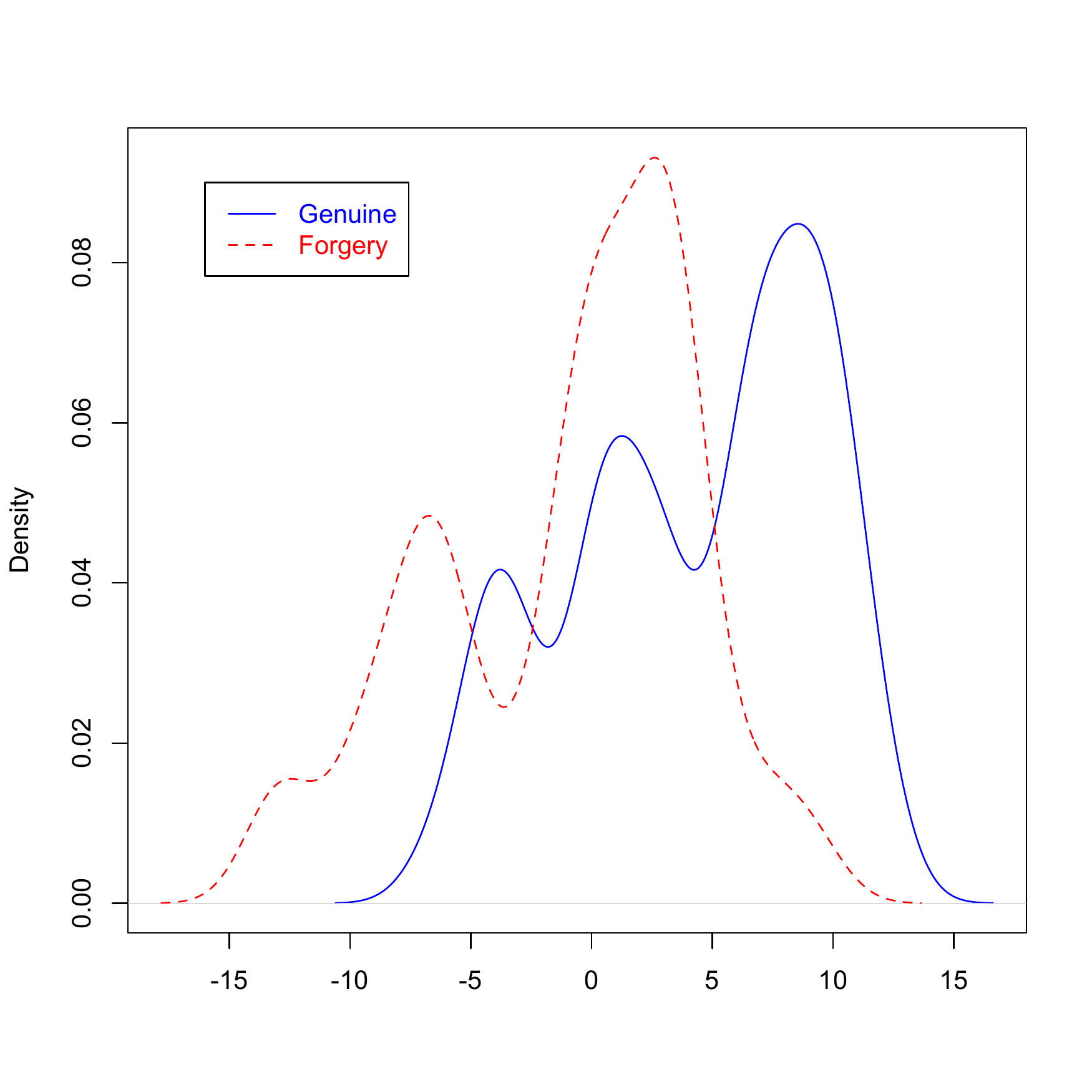} \vspace{-0.1in}\\
(a) VW\vspace{-0.1in}&(b)  SW\vspace{-0.1in}\\
\includegraphics[width=2.3in,height=2.3in]{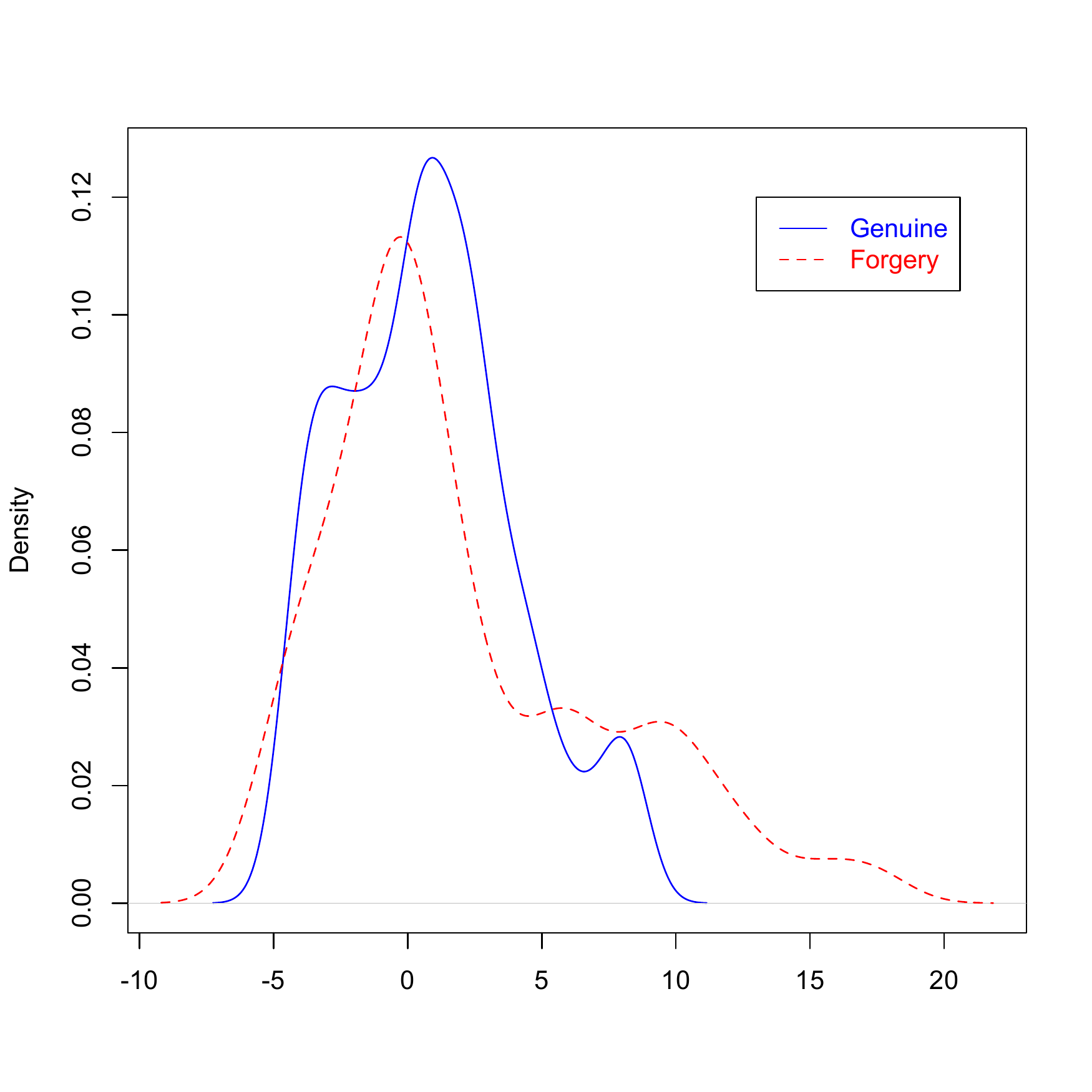} \vspace{-0.1in}&
\includegraphics[width=2.3in,height=2.3in]{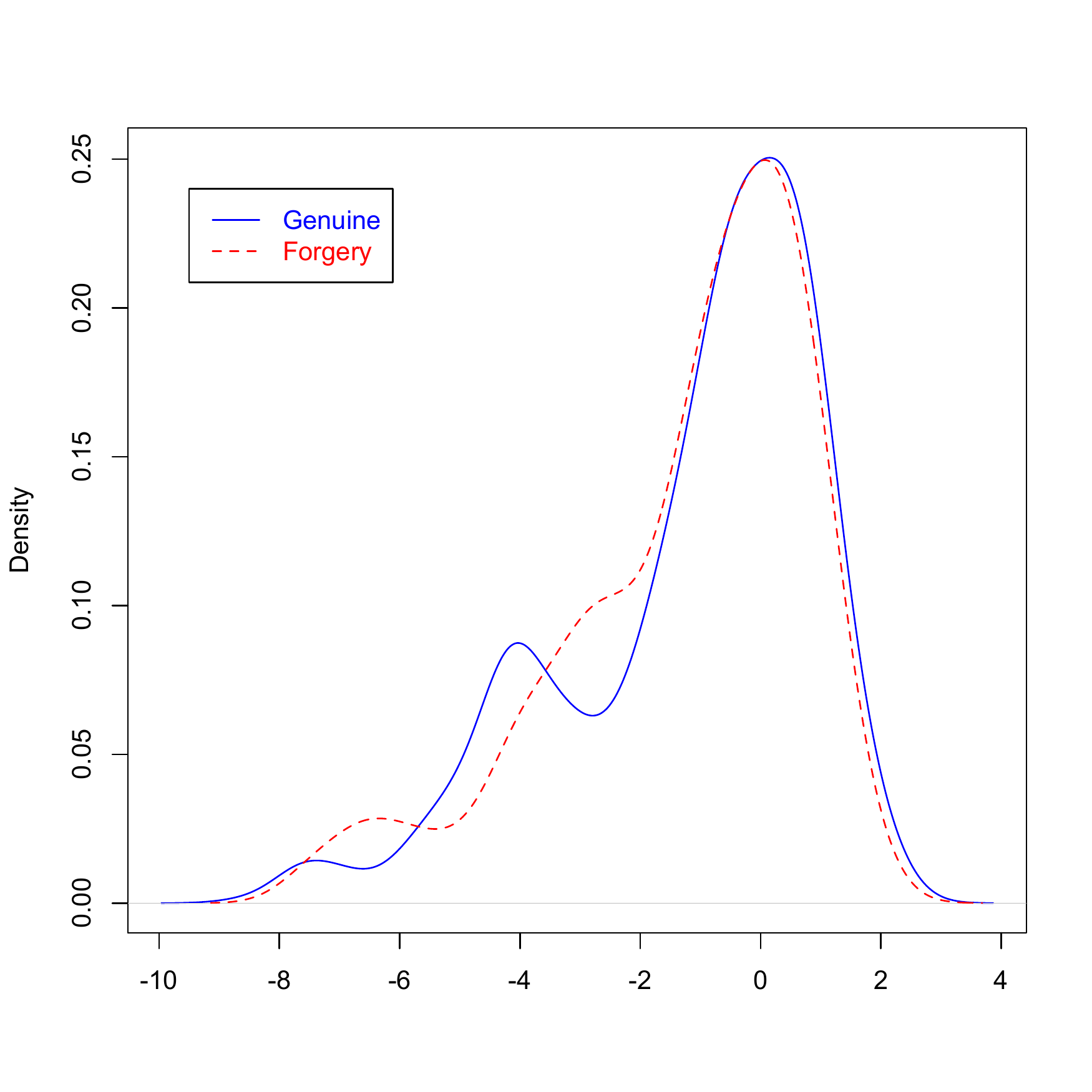} \vspace{-0.1in}\\
(c) KW&(d)  EI\\
\end{tabular}
\caption{Density of each of variables.}
\label{fig:IF}
\end{figure}

From Table \ref{tab-bn}, all tests reject the equality of multivariate distributions of Gdata and Fdata with significantly small $p$-values close to 0.  Also the $p$-values for testing separately the individual distributions of VW, SW and KW are small for all methods and thus we conclude that the underlying distributions of those variables are quite different in two classes. For EI variable, however, we do not have significant evidence to reject the equality of the underlying distributions. This result agrees well with the impression from the last graph (d) in Figure \ref{densityfig}.  In these tests, the $p$-values calculated from JEL approaches are much higher than those calculated from ET, AD, KW and HHG.  As expected, our method performs very similar to the JEL-W approach for the two-sample problem.

\begin{table}[H]
\begin{center}
\begin{small}
\caption{$P$-values of tests }\label{tab-bn}
\vspace*{0.1in}
\begin{tabular}{lcccccr}\hline\hline          
 Variable(s)&  JEL-S &JEL-W & ET & AD & KW&HHG  \\ \hline
(Gdata, Fdata)             &0&0&.005&5.4e-112& 0&.0001 \\
(Gdata-VW, Fdata-VW)        &0&0&.0050&2.6e-205& 0  &.0010  \\
(Gdata-SW, Fdata-SW)        &0&0&.0050&1.3e-74& 0&.0010       \\
(Gdata-KW, Fdata-KW)         &0&0&.0050&4.47e-10& .0226  &.0010  \\
(Gdata-EI, Fdata-EI)          &.4748&.4759&.1950&.1168& .2253  &.1389  \\\hline\hline
\end{tabular}
\end{small}
\end{center}
\end{table}

\section{Conclusion}

In this paper, we have extended the JEL method to the $K$-sample test via the categorical Gini correlation. Standard limiting chi-square distributions with $K-1$ degrees of  freedom are established and are used to conduct hypothesis testings without a permutation procedure. Numerical studies confirm the advantages of the proposed method under a variety of situations.  One of important contributions of this paper is to develop a powerful nonparametric method for multivariate $K$-sample problem. 

Although the proposed $K$-sample JEL test is much more sensitive to shape difference among $K$ distributions, it is dull to detect the variation in location when the differences are subtle. 
This disadvantage probably stems from finding the solution of $\theta$ in equations of (\ref{findtheta}). That is, the within Gini distances and  the overall Gini distances are restricted to be the same. This forces the JEL approach weighing more on the observations that are more close to other distributions. As a result, the JEL approach loses some power to detect the difference among the locations.  This is a common problem for tests based on density functions. For the location difference problem, distribution function approaches such as AD and KW are more preferred. 

Furthermore, the proposed JEL approach is developed based on Euclidean distance, and hence is only invariant under translation and homogeneous changes. 
Dang $et$ $al.$ (\cite{Dang2018}) suggested an affine Gini correlation, and we will continue this work by proposing an affine JEL test. 


\section{Appendix}
Define  $\bi \lambda=(\lambda, \lambda_{1}, ..., \lambda_{K})$,
\begin{align*}
&W_{0n}(\theta, \bi \lambda)=\dfrac{1}{n}\sum_{i=1}^{n} \frac{\hat{V}_i-\theta}{1+\lambda \left(\hat{V}_i-\theta\right )},\\
&W_{kn}(\theta, \bi \lambda)=\dfrac{1}{n} \sum_{l=1}^{n_k} \frac{\hat{V}^{k}_l-\theta}{1+\lambda_{k} \left(\hat{V}^{k}_l-\theta\right )}, \  k=1,...,K,\\
&W_{{(K+1)n}}(\theta, \bi \lambda)=\dfrac{1}{n}\sum_{k=1}^K\lambda_{k}\sum_{l=1}^{n_k} \frac{-1}{1+\lambda_{k} \left(\hat{V}^{k}_l-\theta\right )}+\dfrac{\lambda}{n}\sum_{i=1}^{n} \dfrac{-1}{1+\lambda \left (\hat{V}_{i}-\theta \right) }.
\end{align*}

\begin{lemma}[Hoeffding, 1948]\label{Hoeffding} 
Under condition \textbf{C1}, 
 $$\dfrac{\sqrt{n_k}(U_{n_k}-\theta_0)}{2\sigma_{gk}} \stackrel{d}{\rightarrow} N(0, 1) \ \text{ as}  \ n_k \to \infty,$$
 $$\dfrac{\sqrt{n}(U_{n}-\theta_0)}{2\sigma_{g}} \stackrel{d}{\rightarrow} N(0, 1) \ \text{ as}  \ n \to \infty.$$
\end{lemma}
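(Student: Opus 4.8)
The plan is to recognize the statement as the classical central limit theorem for a non-degenerate $U$-statistic of degree $2$ and to prove it by the Hoeffding decomposition (H\'ajek projection). I treat the pooled statistic $U_n$ in detail, regarding $\bi X_1,\dots,\bi X_n$ as i.i.d.\ from $F$ (which under $H_0$ is exactly the pooled sample); the argument for each $U_{n_k}$ is verbatim with $(n, g, \sigma_g)$ replaced by $(n_k, g_k, \sigma_{g_k})$ and the sample drawn from $F_k$. Recall $U_n=\binom{n}{2}^{-1}\sum_{i<j}h(\bi X_i,\bi X_j)$ with symmetric kernel $h(\bi x_1,\bi x_2)=\|\bi x_1-\bi x_2\|$, $\E U_n=\theta_0$, and first projection $g(\bi x)=\E h(\bi x,\bi X)=\E\|\bi x-\bi X\|$.

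First I would form the H\'ajek projection
\begin{align*}
\hat U_n:=\theta_0+\sum_{i=1}^{n}\E\bigl[U_n-\theta_0\mid \bi X_i\bigr].
\end{align*}
A short conditioning computation, in which only the $n-1$ summands whose index pair contains $i$ contribute, gives $\E[U_n-\theta_0\mid\bi X_i]=\tfrac{2}{n}(g(\bi X_i)-\theta_0)$, so that
\begin{align*}
\sqrt{n}(\hat U_n-\theta_0)=\frac{2}{\sqrt{n}}\sum_{i=1}^{n}\bigl(g(\bi X_i)-\theta_0\bigr).
\end{align*}
Since $g(\bi X_1)-\theta_0,\dots,g(\bi X_n)-\theta_0$ are i.i.d.\ with mean $0$ and, by \textbf{C1}, variance $\sigma_g^2\in(0,\infty)$, the Lindeberg--L\'evy CLT gives $\sqrt{n}(\hat U_n-\theta_0)\stackrel{d}{\rightarrow}N(0,4\sigma_g^2)$, i.e.\ $\sqrt{n}(\hat U_n-\theta_0)/(2\sigma_g)\stackrel{d}{\rightarrow}N(0,1)$.

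The remaining and most delicate step is to show that the scaled remainder $\sqrt{n}(U_n-\hat U_n)$ is asymptotically negligible. Because $\hat U_n$ is the $L^2$-projection onto sums of single-coordinate functions, it is orthogonal to the remainder, whence $\E[(U_n-\hat U_n)^2]=\Var(U_n)-\Var(\hat U_n)$. I would then invoke the standard variance identity for a degree-$2$ $U$-statistic, $\Var(U_n)=\binom{n}{2}^{-1}[\,2(n-2)\zeta_1+\zeta_2\,]$ with $\zeta_1=\sigma_g^2$ and $\zeta_2=\Var\,h(\bi X_1,\bi X_2)$, and compare it with $\Var(\hat U_n)=4\sigma_g^2/n$. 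The leading $1/n$ contributions cancel, leaving $\E[(U_n-\hat U_n)^2]=O(n^{-2})$; hence $\Var(\sqrt{n}(U_n-\hat U_n))\to 0$, so $\sqrt{n}(U_n-\hat U_n)\to 0$ in probability, and Slutsky's theorem transfers the Gaussian limit from $\hat U_n$ to $U_n$, yielding the convergence for the pooled statistic (the second display). The statement for each within-sample $U_{n_k}$ (the first display) follows by the identical argument.

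The one hypothesis the variance identity quietly needs is $\zeta_2<\infty$, and I expect confirming it to be the only subtle point. Condition \textbf{C1} already suffices: the reverse triangle bound $g(\bi x)\ge\bigl|\,\|\bi x\|-\E\|\bi X\|\,\bigr|$ shows that $\sigma_g^2<\infty$ forces $\E\|\bi X\|^2<\infty$, and then $\E\|\bi X_1-\bi X_2\|^2\le 2\E\|\bi X\|^2+2(\E\|\bi X\|)^2<\infty$ bounds $\zeta_2$. With $\zeta_2$ finite the remainder estimate holds as stated, and both convergences follow.
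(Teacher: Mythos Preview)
The paper does not supply a proof of this lemma; it is quoted as the classical result of Hoeffding (1948) and used as an input to the subsequent arguments. Your proof via the H\'ajek projection is exactly the standard textbook route and is correct. Two small remarks. First, the lemma as written centers both $U_n$ and each $U_{n_k}$ at the same $\theta_0$, which is only meaningful under $H_0$; you note this and handle it properly by treating the pooled sample as i.i.d.\ from $F$ under $H_0$. Second, your observation that \textbf{C1} already forces $\E\|\bi X\|^2<\infty$ (via $g(\bi x)\ge\bigl|\,\|\bi x\|-\E\|\bi X\|\,\bigr|$), and hence $\zeta_2<\infty$, is a genuine point: the paper tacitly relies on the second moment of the kernel being finite, and your argument supplies it from the stated hypothesis rather than assuming it separately.
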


\begin{lemma}
Let $S_{k}=\dfrac{1}{n_k} \displaystyle\sum_{l=1}^{n_k} \left (\hat{V}^{k}_l- \theta_0\right )^2,  k=1, ..., K,$ and 
$S_0=\dfrac{1}{n} \displaystyle\sum_{i=1}^{n}\left ( \hat{V}_i- \theta_0\right )^2.$  
Under the conditions of Lemma \ref{Hoeffding}, 
\begin{itemize}
\item (i) $S_{k}=4\sigma^2_{g_k}+o_p(1)$, as $n_k \to \infty, k=1, ..., K$
\item (ii) $S_0=4\sigma^2_g+o_p(1)$, as $n \to \infty$.
\end{itemize}
\end{lemma}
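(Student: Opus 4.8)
The plan is to show that each sample second moment of the pseudo-values is, up to negligible terms, an ordinary average of i.i.d. summands, to which the law of large numbers applies. I will establish (ii); part (i) follows by the identical argument with $g_k$, $n_k$ and $F_k$ in place of $g$, $n$ and $F$.

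First I would record an exact algebraic identity for the pseudo-values. Setting $h(\bi x_1,\bi x_2)=\|\bi x_1-\bi x_2\|$ and $\bar{h}_i=\frac{1}{n-1}\sum_{j\neq i}h(\bi X_i,\bi X_j)$, the defining relation $\binom{n}{2}U_n=\binom{n-1}{2}U^{(-i)}_{n-1}+\sum_{j\neq i}h(\bi X_i,\bi X_j)$ can be solved for $U^{(-i)}_{n-1}$ and substituted into $\hat V_i=nU_n-(n-1)U^{(-i)}_{n-1}$ to give $\hat V_i=a_n\bar h_i-b_nU_n$ with $a_n=\frac{2(n-1)}{n-2}$ and $b_n=\frac{n}{n-2}$. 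Writing $\bar h_i=g(\bi X_i)+\varepsilon_i$ and $U_n=\theta_0+\delta_n$, and using the identity $a_n=b_n+1$, this rearranges to
$$\hat V_i-\theta_0=a_n\bigl(g(\bi X_i)-\theta_0\bigr)+a_n\varepsilon_i-b_n\delta_n ,$$
where $\varepsilon_i=\bar h_i-g(\bi X_i)$ and $\delta_n=U_n-\theta_0$, and $a_n\to2$, $b_n\to1$.

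Squaring and averaging, I would isolate the leading term $a_n^2\,\frac1n\sum_{i=1}^n(g(\bi X_i)-\theta_0)^2$. Since $\E g(\bi X)=\theta_0$ and $\Var(g(\bi X))=\sigma_g^2<\infty$ under \textbf{C1}, the law of large numbers gives $\frac1n\sum_i(g(\bi X_i)-\theta_0)^2\to\sigma_g^2$, and with $a_n^2\to4$ this term tends to $4\sigma_g^2$. It then remains to prove that every other term from the expansion is $o_p(1)$. The term $b_n^2\delta_n^2$ is $O_p(1/n)$ by Lemma \ref{Hoeffding}, and the three cross terms can be bounded by Cauchy--Schwarz in terms of the $L_2$ averages already in hand, so they vanish once the fluctuation term is controlled.

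The crux --- and the step I expect to be the main obstacle --- is showing $\frac1n\sum_i\varepsilon_i^2=o_p(1)$. Conditioning on $\bi X_i$, the quantity $\varepsilon_i=\frac{1}{n-1}\sum_{j\neq i}\bigl(h(\bi X_i,\bi X_j)-g(\bi X_i)\bigr)$ is an average of $n-1$ conditionally centered, conditionally i.i.d. terms, whence $\E\varepsilon_i^2=\frac{1}{n-1}\E\,\Var\bigl(h(\bi X_i,\bi X_2)\mid\bi X_i\bigr)\le\frac{1}{n-1}\E\,h(\bi X_1,\bi X_2)^2$. A finite second moment of the kernel is available under \textbf{C1}: the bound $|g(\bi x)-\|\bi x\||\le\E\|\bi X\|$ shows that $\sigma_g<\infty$ forces $\E\|\bi X\|^2<\infty$, hence $\E\|\bi X_1-\bi X_2\|^2<\infty$. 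Therefore $\E\bigl[\frac1n\sum_i\varepsilon_i^2\bigr]=O(1/n)$, and Markov's inequality yields $\frac1n\sum_i\varepsilon_i^2=o_p(1)$. Collecting the leading term with the negligible remainders gives $S_0=4\sigma_g^2+o_p(1)$, and the sample-$k$ version of the same computation gives $S_k=4\sigma^2_{g_k}+o_p(1)$.
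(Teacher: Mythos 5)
Your proof is correct, but the comparison with the paper is one-sided: the paper states this lemma without any proof at all (it is treated as the standard jackknife pseudo-value variance lemma in the spirit of Jing, Yuan and Zhou (2009) and Shi (1984)), so your argument is a genuine addition rather than a variant of anything in the text. Your algebra checks out: solving $\binom{n}{2}U_n=\binom{n-1}{2}U^{(-i)}_{n-1}+\sum_{j\neq i}h(\bi X_i,\bi X_j)$ for $U^{(-i)}_{n-1}$ does give $\hat V_i=\frac{2(n-1)}{n-2}\bar h_i-\frac{n}{n-2}U_n$, the identity $a_n=b_n+1$ makes the centering at $\theta_0$ exact, and the split into the law-of-large-numbers term $a_n^2\frac{1}{n}\sum_i(g(\bi X_i)-\theta_0)^2\to 4\sigma^2_g$, the fluctuation term $\frac{1}{n}\sum_i\varepsilon_i^2=O_p(1/n)$, the term $b_n^2\delta_n^2=O_p(1/n)$ from Lemma 6.1, and Cauchy--Schwarz cross terms is complete. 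The most valuable step is the one you flagged yourself: the standard form of this lemma assumes $\E h^2(\bi X_1,\bi X_2)<\infty$, a hypothesis that appears nowhere in \textbf{C1}--\textbf{C2}; your observation that $|g(\bi x)-\|\bi x\||\le\E\|\bi X\|$ for the Euclidean kernel, so that $\sigma_g<\infty$ already forces $\E\|\bi X\|^2<\infty$ and hence $\E\|\bi X_1-\bi X_2\|^2<\infty$, closes a gap the paper silently passes over, and it exploits the specific kernel $h(\bi x_1,\bi x_2)=\|\bi x_1-\bi x_2\|$ in a way a generic citation could not. Two minor caveats worth making explicit: the lemma is implicitly a statement under $H_0$ (otherwise centering the pooled pseudo-values at $\theta_0$ and treating the pooled sample as i.i.d.\ from $F$ would both fail), which you use tacitly when writing $\E g(\bi X)=\theta_0$; and the conditional-variance computation $\E[\varepsilon_i^2\mid \bi X_i]=\frac{1}{n-1}\Var(h(\bi X_i,\bi X)\mid \bi X_i)$ relies on the $\bi X_j$, $j\neq i$, being i.i.d.\ and independent of $\bi X_i$, which again holds for each fixed sample and for the pooled sample under $H_0$, exactly the regime in which the paper invokes the lemma.
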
 

\begin{lemma}[Liu, Liu and Zhou, 2018]\label{Liu5.5}
Under conditions \textbf{C1} and \textbf{C2} and $H_0$, with probability tending to one as $\min\{n_1, ..., n_K\} \to \infty$, there exists a root $\tilde{\theta}$ of 
\begin{align*}
&W_{kn}(\theta, \bi \lambda)=0, \ k=0, 1,..., K+1,
\end{align*}
such that $|\tilde{\theta}-\theta_0|< \delta,$ where $\delta=n^{-1/3}$. 
\end{lemma}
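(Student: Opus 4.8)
The plan is to break the $(K+2)$-dimensional system into an \emph{inner} problem that solves for the Lagrange multipliers as functions of $\theta$ and an \emph{outer} scalar equation in $\theta$ alone, and then to locate a root of the outer equation inside the window $[\theta_0-\delta,\theta_0+\delta]$ by a sign-change (intermediate value) argument. The crucial structural observation is that, for a \emph{fixed} $\theta$, the equations $W_{0n}(\theta,\bi\lambda)=0$ and $W_{kn}(\theta,\bi\lambda)=0$ for $k=1,\dots,K$ decouple completely: $W_{0n}$ involves only $\lambda$ and the pooled pseudo-values, while each $W_{kn}$ involves only $\lambda_k$ and the group-$k$ pseudo-values. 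Only the remaining equation $W_{(K+1)n}=0$, which is the stationarity condition in $\theta$, couples them.

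First I would fix $\theta$ with $|\theta-\theta_0|\le\delta=n^{-1/3}$ and treat those $K+1$ decoupled equations as standard one-sample empirical-likelihood problems for a mean. By the usual monotonicity/convex-duality argument each admits a unique solution $\lambda(\theta)$, resp. $\lambda_k(\theta)$, keeping all denominators positive, i.e. $1+\lambda(\hat{V}_i-\theta)>0$ and $1+\lambda_k(\hat{V}^{k}_l-\theta)>0$. Positivity is guaranteed once I establish $\max_i|\hat{V}_i-\theta_0|=o_p(\sqrt{n})$ and its group analogues, which follow from the finite-variance condition \textbf{C1}. Expanding the defining equations to first order and invoking the variance estimates $S_0=4\sigma^2_g+o_p(1)$, $S_k=4\sigma^2_{g_k}+o_p(1)$ together with $U_n-\theta=O_p(\delta)$ and $U_{n_k}-\theta=O_p(\delta)$, I would obtain, uniformly over the window,
$$\lambda(\theta)=\frac{U_n-\theta}{4\sigma^2_g}\bigl(1+o_p(1)\bigr),\qquad \lambda_k(\theta)=\frac{U_{n_k}-\theta}{4\sigma^2_{g_k}}\bigl(1+o_p(1)\bigr),$$
so that in particular $\lambda(\theta),\lambda_k(\theta)=O_p(n^{-1/3})$ and are continuous (indeed differentiable) in $\theta$.

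Next I would substitute these multipliers into $W_{(K+1)n}$. A clean simplification occurs here: the identities $\sum_{i=1}^n (1+\lambda(\hat{V}_i-\theta))^{-1}=n$ and $\sum_{l=1}^{n_k}(1+\lambda_k(\hat{V}^{k}_l-\theta))^{-1}=n_k$ hold exactly whenever $W_{0n}=0$ and $W_{kn}=0$, so that
$$\Phi(\theta):=W_{(K+1)n}(\theta,\bi\lambda(\theta))=-\lambda(\theta)-\sum_{k=1}^K\hat\alpha_k\lambda_k(\theta)$$
with no further approximation. Inserting the expansions gives
$$\Phi(\theta)=-\frac{U_n-\theta}{4\sigma^2_g}-\sum_{k=1}^K\hat\alpha_k\frac{U_{n_k}-\theta}{4\sigma^2_{g_k}}+o_p(n^{-1/3}),$$
which to leading order is affine and strictly increasing in $\theta$ with slope bounded away from zero, and satisfies $\Phi(\theta_0)=O_p(n^{-1/2})$ by Lemma~\ref{Hoeffding}. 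Evaluating at the endpoints, the term linear in $\theta$, of exact order $\delta=n^{-1/3}$, dominates both the $O_p(n^{-1/2})$ centering and the $o_p(n^{-1/3})$ remainder, so $\Phi(\theta_0+\delta)>0>\Phi(\theta_0-\delta)$ with probability tending to one. Since $\Phi$ is continuous on the window, the intermediate value theorem yields a root $\tilde\theta$ with $|\tilde\theta-\theta_0|<\delta$, which paired with $\bi\lambda(\tilde\theta)$ solves the full system.

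The main obstacle I anticipate is making the reduction to $\Phi$ rigorous and \emph{uniform} rather than merely pointwise: I must verify that the inner solutions $\lambda(\theta),\lambda_k(\theta)$ exist, are unique, and vary continuously across the whole window simultaneously, so that $\Phi$ is a genuine continuous function of the single variable $\theta$, and that every remainder discarded in the first-order expansions is $o_p(n^{-1/3})$ uniformly. This hinges on the interplay of three scales, namely the $O_p(n^{-1/2})$ fluctuations of the $U$-statistics, the $n^{-1/3}$ width of the window, and the control of $\max|\hat V|$ needed to keep the denominators positive; keeping the error terms strictly below the $n^{-1/3}$ signal at the endpoints is the delicate bookkeeping. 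Once that is in place, the boundary-crossing conclusion is immediate, exactly as in Owen's original existence argument for empirical likelihood.
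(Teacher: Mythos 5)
Your proposal is correct in its essentials, but note first that the paper itself contains no proof of this lemma: despite the sentence ``we have proved the existence of the solutions \dots in the Appendix,'' the Appendix merely restates the lemma with the attribution to Liu, Liu and Zhou (2018), so the paper's ``proof'' is by citation, and the argument in that lineage (Owen; Qin and Lawless, Lemma~1) runs differently from yours: one profiles out the multipliers at each fixed $\theta$, shows the profile empirical log-likelihood is $O_p(1)$ (or $O(\log\log n)$ a.s.) at $\theta_0$ while it grows at rate at least $c\,n^{1/3}$ everywhere on the boundary $|\theta-\theta_0|=n^{-1/3}$, and concludes that the continuous profile attains its minimum at an interior point, which must be a stationary point and hence a root of the full $(K+2)$-equation system. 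Your route instead exploits the fact that $\theta$ is scalar: the decoupling of the $K+1$ mean constraints, the exact identity $\sum_{i=1}^n\bigl(1+\lambda(\hat V_i-\theta)\bigr)^{-1}=n$ (immediate from $1/(1+x)=1-x/(1+x)$ and $W_{0n}=0$, and likewise within each group), the resulting clean reduction $\Phi(\theta)=-\lambda(\theta)-\sum_{k=1}^K\hat\alpha_k\lambda_k(\theta)$, and an intermediate-value sign change driven by the leading affine term with slope $\tfrac14\bigl(\sigma_g^{-2}+\sum_k\alpha_k\sigma_{g_k}^{-2}\bigr)$ bounded away from zero; this is more elementary and more transparent here, at the cost of requiring uniform-in-$\theta$ expansions of the inner multipliers, whereas the boundary-minimization argument needs no monotonicity or slope control and generalizes to vector-valued parameters. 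Two points you flag loosely deserve explicit treatment: (i) existence of the inner solutions $\lambda(\theta),\lambda_k(\theta)$ requires $\theta$ to lie strictly inside the range of each set of pseudo-values, which holds with probability tending to one uniformly over the window because the limiting variances $4\sigma_g^2$ and $4\sigma_{g_k}^2$ are strictly positive under \textbf{C1}; and (ii) the bound $\max_i|\hat V_i-\theta_0|=o_p(n^{1/2})$ does not follow from \textbf{C1} by mere assertion --- you should observe that $g(\bi x)\ge \|\bi x\|-\E\|\bi X\|$, so $\sigma_g^2<\infty$ forces $\E\|\bi X\|^2<\infty$ and hence square-integrability of the kernel $\|\bi x_1-\bi x_2\|$, which is what the standard JEL machinery (Jing, Yuan and Zhou, 2009) uses to control the maximal pseudo-value and keep all denominators positive. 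With those two steps spelled out, your sign-change argument is a complete and valid alternative to the cited proof.
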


Let $\tilde{\bi \eta}=(\tilde{\theta}, \tilde{\bi \lambda})^T$ be the solution to the above equations, and $\bi \eta_0=(\theta, 0, ..., 0)^T$. By expanding $W_{kn}(\tilde{\bi \eta})$ at $\bi \eta_0$, we have, for $k=0, 1,...,K+1$,
\begin{align*}
&0=W_{kn}(\bi \eta_0)+ \dfrac{\partial W_{kn}}{\partial \theta}(\bi \eta_0)(\tilde{\theta}-\theta_0)+\dfrac{\partial W_{kn}}{\partial \lambda}(\bi \eta_0) \tilde{\lambda}\\
&+\dfrac{\partial W_{kn}}{\partial \lambda_1}(\bi \eta_0) \tilde{\lambda}_{1}+...+\dfrac{\partial W_{kn}}{\partial \lambda_K}(\bi \eta_0) \tilde{\lambda}_{K}+R_{kn},
\end{align*}
where $R_{kn}=\frac{1}{2} (\tilde{\bi \eta}-\bi \eta_0)^T \dfrac{\partial^2 W_{kn}(\bi \eta^{*})}{\partial \bi \eta \partial \bi \eta^T} (\tilde{\bi \eta}-\bi \eta_0)=o_p(n^{-1/2}),$ and $\bi \eta^{*}$ lies between $\bi \eta_0$ and $ \tilde{\bi \eta}$.

\begin{lemma}\label{cov:u}
Under $H_0$, $\mbox{Cov}(U_{n_k}, U_{n_l})=0$, $1\leq k \neq l \leq K$; $\mbox{Cov}(U_n, U_{n_k})=4/n \sigma^2_{g_k}+O(1/n^2)$, $k=1,..,K$.
\end{lemma}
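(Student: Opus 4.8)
The plan is to handle the two assertions separately: the vanishing cross-covariance follows purely from the sampling design, while the covariance of the pooled statistic with a within-sample statistic requires a careful count of the pairs that share observations. For the first assertion, I would observe that the samples $\mathcal{A}_1,\dots,\mathcal{A}_K$ are mutually independent, so that for $k\neq l$ the statistic $U_{n_k}$ depends only on the observations of $\mathcal{A}_k$ and $U_{n_l}$ only on those of $\mathcal{A}_l$. These two collections of observations are disjoint and independent, hence $U_{n_k}$ and $U_{n_l}$ are independent random variables and $\mbox{Cov}(U_{n_k},U_{n_l})=0$; this part does not even use $H_0$.

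For the second assertion I would expand the covariance over pairs,
\begin{align*}
\mbox{Cov}(U_n,U_{n_k})={n\choose 2}^{-1}{n_k\choose 2}^{-1}\sum_{1\leq i<j\leq n}\sum_{1\leq l<m\leq n_k}\mbox{Cov}\bigl(\|\bi X_i-\bi X_j\|,\|\bi X^k_l-\bi X^k_m\|\bigr),
\end{align*}
and then classify each inner term by the number of observations the pooled pair $\{\bi X_i,\bi X_j\}$ shares with the within-sample pair $\{\bi X^k_l,\bi X^k_m\}$, viewing the latter as two distinguished points of the pooled sample. If the two pairs share no observation, independence makes the covariance term zero. If they share exactly one observation, I would condition on that common point $\bi x$: the two factors become conditionally independent, each with conditional mean $g(\bi x)=\E\|\bi x-\bi X\|$, so the covariance term equals $\Var(g(\bi X))=\sigma^2_g$, which under $H_0$ equals $\sigma^2_{g_k}$. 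If the two pairs coincide, the covariance term is the full kernel variance $\Var(\|\bi X^k_1-\bi X^k_2\|)$, a finite $O(1)$ constant under the moment assumptions.

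The combinatorics then drive the conclusion. For each of the ${n_k\choose 2}$ within-sample pairs there is exactly one coincident pooled pair and exactly $2(n-2)$ pooled pairs sharing a single observation, so the double sum collapses to ${n_k\choose 2}\bigl[2(n-2)\sigma^2_{g_k}+\Var(\|\bi X^k_1-\bi X^k_2\|)\bigr]$. Cancelling the factor ${n_k\choose 2}$ and writing ${n\choose 2}^{-1}=2/\{n(n-1)\}$ gives
\begin{align*}
\mbox{Cov}(U_n,U_{n_k})=\frac{4(n-2)}{n(n-1)}\sigma^2_{g_k}+\frac{2}{n(n-1)}\Var(\|\bi X^k_1-\bi X^k_2\|)=\frac{4}{n}\sigma^2_{g_k}+O(1/n^2),
\end{align*}
since $4(n-2)/\{n(n-1)\}=4/n+O(1/n^2)$ and the bounded coincident-pair term is $O(1/n^2)$. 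The main obstacle is the bookkeeping in the classification step: one must correctly recognize that the single-overlap case yields the Hoeffding projection variance $\sigma^2_{g_k}$ through conditioning on the shared point, count precisely $2(n-2)$ such pooled pairs per within-sample pair, and confirm that the finite coincident-pair variance is genuinely absorbed into the $O(1/n^2)$ remainder after normalization.
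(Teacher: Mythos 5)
Your proof is correct, but it takes a genuinely different route from the paper's. The paper proves the second assertion via the Hoeffding decomposition: writing $U_{n_k}=\theta_0+\frac{2}{n_k}\sum_{l}h_1(\bi X^k_l)+\frac{2}{n_k(n_k-1)}\sum_{l<m}h_2(\bi X^k_l,\bi X^k_m)$, and similarly for $U_n$ over the pooled sample, it exploits the orthogonality of the projection terms so that only the $n_k$ shared linear terms and the $\binom{n_k}{2}$ shared quadratic pairs contribute, yielding $\frac{4}{n}\E h_1^2(\bi X^k_1)+O(1/n^2)=\frac{4}{n}\sigma^2_{g_k}+O(1/n^2)$ in two lines. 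You instead expand the covariance kernel pair by kernel pair and classify by overlap; your count of $2(n-2)$ single-overlap pooled pairs plus one coincident pair per within-sample pair is right, your conditioning argument correctly identifies the single-overlap covariance as $\Var(g(\bi X))=\sigma^2_{g_k}$ under $H_0$ (this is exactly the Hoeffding projection variance), and the final arithmetic checks out --- indeed your two contributions combine to give the coefficient $4(n-2)/\{n(n-1)\}+4/\{n(n-1)\}=4/n$ of $\sigma^2_{g_k}$ exactly, with only the $\E h_2^2$ part of the coincident-pair variance left in the $O(1/n^2)$ remainder. The two computations are essentially dual: your single-overlap terms are the paper's $h_1$ contribution, and your coincident-pair term bundles its $h_2$ contribution with the $h_1$ parts of that same pair. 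Your version is more elementary and makes the source of the covariance (shared observations between the pooled and within-sample statistics) explicit, at the cost of heavier bookkeeping; the paper's is shorter once the standard orthogonality of Hoeffding projections is taken for granted. One shared caveat: both arguments implicitly require the kernel to have a finite second moment, i.e.\ $\Var\|\bi X^k_1-\bi X^k_2\|<\infty$, which goes slightly beyond \textbf{C1}; since the paper is equally silent on this, it is not a gap relative to the paper's own proof.
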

\begin{proof}
First of all, $k$ sample and $l$ sample are independent and hence $\mbox{Cov}(U_{n_k}, U_{n_l})=0$ for $1\leq k \neq l \leq K$. For the covariance between $U_n$ and $U_{n_k}$, we have the Hoeffding decompositions
\begin{align*}
&U_{n_k}=\theta_0+\frac{2}{n_k}\sum_{l=1}^{n_k}h_1(X^{k}_l)+\frac{2}{n_k (n_k-1)}\sum_{1\leq l<m \leq n_k}h_2(X^{k}_l, X^{k}_m), \\
&U_n=\theta_0+\frac{2}{n}\sum_{i=1}^{n}h_1(X_i)+\frac{2}{n (n-1)}\sum_{1\leq l<m\leq n}h_2(X_l, X_m),
\end{align*}
where $h_1(x)=\E h(x, X)-\theta_0$  and  $h_2(x, y)=h(x, y)-h_1(x)-h_1(y)+\theta_0$.  Then
\begin{eqnarray*}
 Cov(U_n, U_{n_k}) &=& \frac{2}{n_k}\frac{2}{n} n_k \E h_1^2(X_1^k)+ \frac{2}{n_k(n_k-1)} \frac{2}{n(n-1)}n_k(n_k-1) \E h_2^2(X_1^k, X_2^k)\\  
& = & \frac{4}{n} \sigma_{gk}^2+ O(1/n^2).
\end{eqnarray*} 
This completes the proof of Lemma 6.4.
\end{proof}

\noindent\textbf{Proof of Theorem \ref{wilk-ksample}.}
$$
\begin{pmatrix}
W_{0n}(\bi \eta_0)\\
   W_{1n}(\bi \eta_0)   \\
  W_{2n}(\bi \eta_0)  \\
    \vdots \\
  W_{Kn}(\bi \eta_0))\\
 0\\
\end{pmatrix}
= \bi{\mathcal{B}}
\begin{pmatrix}
\tilde{\lambda}\\
    \tilde{\lambda}_{1}\\
\tilde{\lambda}_{2} \\
    \vdots \\
\tilde{\lambda}_{K}\\
   \tilde{\theta}-\theta_0
    \end{pmatrix}+o_{p}(n^{-1/2}),
$$
where 
$$  \bi{\mathcal{B}}=
\begin{pmatrix}
  \sigma^2&0&0&\hdots&0&0& 1 \\
 0&\alpha_1 \sigma^2_1&0&\hdots&0&0&\alpha_1\\
    \hdotsfor{7} \\
    0&0&0&\hdots&0&\alpha_K \sigma^2_K&\alpha_K\\
      1&\alpha_1&\alpha_2&\hdots&\alpha_{K-1}&\alpha_{K}&0\\   
\end{pmatrix}_{(K+2)\times (K+2)},$$
$\sigma^2=4\sigma^2_{g}$ and $\sigma^2_k=4\sigma^2_{gk}, k=1,..,K.$ 

It is easy to see that $\bi{\mathcal{B}}$ is nonsingular under Conditions {\bf C1} and {\bf C2}. Therefore,
$$\begin{pmatrix}
\tilde{\lambda}\\
    \tilde{\lambda}_{1}\\
\tilde{\lambda}_{2} \\
    \vdots \\
\tilde{\lambda}_{K}\\
   \tilde{\theta}-\theta_0
    \end{pmatrix}=\bi{\mathcal{B}}^{-1}\begin{pmatrix}
   W_{0n}(\bi \eta_0)   \\
  W_{1n}(\bi \eta_0)  \\
    \vdots \\
  W_{Kn}(\bi \eta_0))\\
  0\\
\end{pmatrix}+o_{p}(n^{-1/2}).
$$

Under $H_0$, $\sigma^2_1=...=\sigma^2_K= \sigma^2$.
\begin{align*}
&\tilde{\theta}-\theta_0=\sum^K_{k=0} \dfrac{1}{2} W_{kn}(\bi \eta_0)+o_p(n^{-1/2}).
\end{align*}
By \cite{Jing2009}, we have 
\begin{align*}
&\tilde{\lambda}=\dfrac{U_{n}-\tilde{\theta}}{\tilde{S}_{0}}+o_{p}(n^{-1/2})\; \mbox{ and }\;\tilde{\lambda}_{k}=\dfrac{U_{n_k}-\tilde{\theta}}{\tilde{S}_{k}}+o_{p}(n^{-1/2}), \ k=1,...,K,
\end{align*}
where 
$\tilde{S}_{0}=\dfrac{1}{n}\displaystyle\sum^{n}_{i=1}(\hat{V}_{i}-\tilde{\theta})^2$ and $\tilde{S}_{k}=\dfrac{1}{n_k}\displaystyle\sum^{n_k}_{l=1}(\hat{V}^{k}_{l}-\tilde{\theta})^2$. It is easy to check that $\tilde{S}_{0}=\sigma^2+o_p(1)$ and $\tilde{S}_{k}=\sigma^2_k+o_p(1),\ k=0,1,..,K$.
By the proof of Theorem 1 in \cite{Jing2009},
\begin{align*}
-2\log R=\left [ \sum_{k=1}^K \dfrac{n_{k}(U_{n_k}-\tilde{\theta})^2}{\sigma^2_{k}}+\frac{n(U_n-\tilde{\theta})^2}{\sigma^2} \right](1+o_{p}(1)).
\end{align*}
With simple algebra, we have 
\begin{align}\label{decomposition}
&\frac{n(U_n-\tilde{\theta})^2}{\sigma^2}+\sum_{k=1}^K \dfrac{n_{k}(U_{n_k}-\tilde{\theta})^2}{\sigma^2_{k}}\nonumber \\
&=(\sqrt{n}W_{0n}(\bi \eta_0), ..., \sqrt{n}W_{Kn}(\bi \eta_0) \times \bi{A}^T \bi{\mathcal{W}} \bi{A} \times(\sqrt{n}W_{0n}(\bi \eta_0), ..., \sqrt{n}W_{Kn}(\bi \eta_0))^T +o_p(1),\nonumber\\
\end{align}
where
$$ \bi{A}=
\begin{pmatrix}
1/2&-1/2&-1/2&\hdots&-1/2&-1/2&\hdots&-1/2\\
 -1/2&\alpha^{-1}_{1}-1/2&-1/2&\hdots&-1/2&-1/2&\hdots&-1/2\\
    \hdotsfor{8} \\
   -1/2&-1/2&-1/2&\hdots&-1/2&-1/2&\hdots&\alpha^{-1}_{K}-1/2\\
\end{pmatrix}$$
and
$$ \bi{\mathcal{W}}=
\begin{pmatrix}
1/\sigma^2&0&0&\hdots&0\\
 0&\alpha_1/\sigma^2_1&0&\hdots&0\\
    \hdotsfor{5} \\
    0& 0&0&\hdots&\alpha_K/\sigma^2_K\\
\end{pmatrix}.$$
Furthermore, by \cite{Shi1984}, we have the central limit theorem for $W$'s at $\bi \eta_0$. This is because each $W(\bi \eta_0)$ is the average of asymptotically independent psuedo-values. That is,  
$$
\sqrt{n}\begin{pmatrix}
W_{0n}(\bi \eta_0))\\
   W_{1n}(\bi \eta_0)   \\
  W_{2n}(\bi \eta_0)  \\
    \vdots \\
  W_{Kn}(\bi \eta_0))\\
\end{pmatrix}
\overset{D}{\to} N(\bi 0, \bi \Sigma),
$$
where 

$$ \bi \Sigma=
\begin{pmatrix}
\sigma^2&\alpha_1 \sigma^2_1&\alpha_2 \sigma^2_2&\hdots&\alpha_K \sigma^2_K\\
 \alpha_1 \sigma^2_1&\alpha_1 \sigma^2_1&0&\hdots&0\\
\alpha_2 \sigma^2_2&0&\alpha_2 \sigma^2_2&\hdots&0\\
    \hdotsfor{5} \\
    \alpha_K \sigma^2_K& 0&0&\hdots&\alpha_K \sigma^2_K\\
\end{pmatrix}.$$

Therefore, under $H_0$, $-2\log R$ converges to $\sum_{i=1}^{K+1} \omega_i \chi^2_i$ in distribution, where $\chi^2_i, i=1,..., K+1$ are $K+1$ independent chi-square random variables with one degree of freedom, and $\omega_i, i=1,.., K+1$ are eigenvalues of 
  $\bi \Sigma_0^{1/2} \bi A^T \bi{\mathcal{W}_0}  \bi {A} \bi \Sigma_0^{1/2}$, where 
$$ \bi \Sigma_0=
\begin{pmatrix}
1&\alpha_1&\alpha_2&\hdots&\alpha_K\\
 \alpha_1&\alpha_1&0&\hdots&0\\
    \hdotsfor{5} \\
    \alpha_K& 0&0&\hdots&\alpha_K\\
\end{pmatrix}$$
and 
$$ \bi{\mathcal{W}_0}=
\begin{pmatrix}
1&0&0&\hdots&0\\
 0&\alpha_1&0&\hdots&0\\
    \hdotsfor{5} \\
    0& 0&0&\hdots&\alpha_K\\
\end{pmatrix}.$$


We can show that $\bi {A}^{T} \bi{\mathcal{W}_0} \bi{A}=\bi{A}$. Hence, $\bi \Sigma_0^{1/2} \bi {A}^T \bi{\mathcal{W}_0}  \bi {A} \bi \Sigma_0^{1/2}=\bi \Sigma_0 \bi{A}$ since $\bi{A}$ is symmetric. With algebra calculation,  the eigenvalues of $\bi \Sigma_0 \bi{A}$ are \{0, 0, 1,...,1\} with trace($\bi \Sigma_0 \bi{A})=K-1.$ By this result, we  complete the proof.    $\hfill{\Box}$
\vspace{0.1in}

\noindent \textbf{Proof of Theorem \ref{consis:test}.}
Under $H_a$, at least one of $\mathbb{E}U_k,$ $k=1,...,K$ will be different from the others. Let $\mathbb{E}U_k=\theta_k, \ k=1,...,K.$
From (\ref{decomposition}), 
\begin{eqnarray*}
-2\log R&=&\dfrac{n(U_n-\tilde{\theta})^2}{\tilde{S}_0^2}+\sum^K_{k=1}\dfrac{n_k(U_{n_k}-\tilde{\theta})^2}{\tilde{S}^2_i}+o(1)\\
&=&\dfrac{n(U_n-\tilde{\theta})^2}{\tilde{S}_0^2}+ \sum^K_{k=1}\left [\dfrac{\sqrt{n_k}(U_{n_k}-\theta_k)}{\tilde{S}_k}+\dfrac{\sqrt{n_k}(\theta_k-\tilde{\theta})}{\tilde{S}_k} \right ]^2+o(1),
\end{eqnarray*}
which is divergent since at least one of $\dfrac{\sqrt{n_k}(\theta_k-\tilde{\theta})^2}{\tilde{S}_k}, k=1,...,K$ will diverge to $\infty$. $\hfill{\Box}$

%
%


\end{document}